\documentclass[runningheads]{llncs}
\usepackage[utf8]{inputenc}

\newif\ifproc
\procfalse

\usepackage{graphicx}
\usepackage{hyperref}
\usepackage{amssymb, amsmath}
\graphicspath{{figures/}}

\renewcommand{\paragraph}[1]{\smallskip\noindent\textbf{\textsf{#1}}}

\usepackage{csquotes}
\usepackage{subcaption}

\usepackage{xcolor}
\usepackage{booktabs}
\usepackage{colortbl}
\colorlet{tablerowcolor}{gray!10}
\newcommand{\rowcol}{\rowcolor{tablerowcolor}}

\usepackage[inline]{enumitem}
\setlist[enumerate]{nosep}
\usepackage{hyperref}
\usepackage{timetravel}
\setCounters{theorem,lemma,corollary}
\usepackage{wrapfig}
\usepackage{multirow}
\usepackage[misc,geometry]{ifsym}

\DeclareMathOperator{\slope}{slope}
\newcommand{\eps}{\varepsilon}

\let\doendproof\endproof
\renewcommand\endproof{~\hfill$\qed$\doendproof}
\spnewtheorem*{proofWithFormula}{Proof}{\itshape}{\rmfamily}
\spnewtheorem*{sketchWithFormula}{Proof sketch}{\itshape}{\rmfamily}
\spnewtheorem*{sketch}{Proof sketch}{\itshape}{\rmfamily}

\newcounter{casecounter}
\newcounter{subcasecounter}
\newcounter{subsubcasecounter}
\makeatletter
\newcommand{\ccase}[1]{%
  \stepcounter{casecounter}%
  \setcounter{subcasecounter}{0}%
  \setcounter{subsubcasecounter}{0}%
  \protected@write \@auxout {}{\string \newlabel {#1}{{\thecasecounter}{\thepage}{\thecasecounter}{#1}{}} }%
  \hypertarget{#1}{\noindent\textbf{Case \thecasecounter.}}
}
\newcommand{\subcase}[1]{%
  \stepcounter{subcasecounter}%
  \setcounter{subsubcasecounter}{0}%
  \protected@write \@auxout {}{\string \newlabel {#1}{{\thecasecounter.\thesubcasecounter}{\thepage}{\thecasecounter.\thesubcasecounter}{#1}{}} }%
  \hypertarget{#1}{\noindent\textbf{Case \thecasecounter.\thesubcasecounter.}}
}
\newcommand{\subsucase}[1]{%
  \stepcounter{subsubcasecounter}%
  \protected@write \@auxout {}{\string \newlabel {#1}{{\thecasecounter.\thesubcasecounter.\thesubsubcasecounter}{\thepage}{\thecasecounter.\thesubcasecounter.\thesubsubcasecounter}{#1}{}} }%
  \hypertarget{#1}{\noindent\textbf{Case \thecasecounter.\thesubcasecounter.\thesubsubcasecounter.}}
}
\makeatother

\newcommand{\Tred}{\ensuremath{T'}}
\newcommand{\Trred}{\ensuremath{T''}}

\title{Drawing planar graphs with few segments on a polynomial grid\thanks{
  This work was initiated at the Workshop on Graph and Network Visualization
  2017.
  The work of P.~Kindermann was partially supported by DFG grant SCHU~2458/4-1.}}

\author{Philipp Kindermann\inst{1}\ifproc$^{\textrm{(\Letter)}}$\orcidID{0000-0001-5764-7719}\fi
\and Tamara Mchedlidze\inst{2}
\and Thomas~Schneck\inst{3}
\and Antonios~Symvonis\inst{4}}
\institute{%
  Universit\"at W\"urzburg, Germany,
  \email{philipp.kindermann@uni-wuerzburg.de}
  \and
  Karlsruhe Institute of Technology (KIT), Germany,
  \email{mched@iti.uka.de}
  \and
  Universit\"at T\"ubingen, Germany,
  \email{thomas.schneck@uni-tuebingen.de}
  \and
  National Technical University of Athens, Greece,
  \email{symvonis@math.ntua.gr}}

\authorrunning{P. Kindermann et al.}
\titlerunning{Drawing planar graphs with few segments on the grid}

\begin{document}
\maketitle

\begin{abstract}
The visual complexity of a graph drawing can be measured by the number of geometric objects used for the representation of its elements. 
In this paper, we study planar graph drawings where edges are represented by few segments.
In such a drawing, one segment may represent multiple edges forming a path. %
Drawings of planar graphs with few segments were intensively studied in the past years. 
However, the area requirements were only considered for limited subclasses of planar graphs.
In this paper, we show that trees have drawings with $3n/4-1$ segments and $n^2$ area, 
improving the previous result of $O(n^{3.58})$.  We also show that 3-connected planar graphs 
and biconnected outerplanar graphs have a drawing with $8n/3-O(1)$ and $3n/2-O(1)$ segments, 
respectively, and $O(n^3)$ area.
\end{abstract}

\section{Introduction}

The quality of a graph drawing can be assessed in a variety of ways: area, 
crossing number, bends, angular resolution, and many more.
All these measures have their justification, but in general it is challenging 
to optimize all of them in a single drawing. 
Recently, the \emph{visual complexity} was suggested as another quality 
measure for drawings~\cite{s-dgfa-jgaa15}. The visual complexity denotes the 
number of simple geometric entities used in the drawing. 

The visual complexity of a straight-line graph drawing can be formalized as the 
number of segments  formed by its edges, which we refer to as 
\emph{segment complexity}. Notice that edges constituting a single segment form 
a path in the graph. The idea of representing graphs with fewer segments complies
with the Gestalt principles of perception,
which are  rules for the organization of perceptual scenes 
introduced in the area of psychology in the 19th century~\cite{kmv-gpgd-gd15}. 
According to the law of continuation, 
the edges forming a segment may be easier grouped by our perception into a single 
entity. Therefore, drawing graphs with fewer segments may ease their perceptual processing.
A recent user study~\cite{kms-eaadf-jgaa18} suggests that  lowering the segment 
complexity may positively influence
aesthetics, depending on the background of the observer, as long as it does not
introduce unnecessarily sharp corners.
From the theoretical perspective, it is natural to ask for a drawing of a graph with the smallest segment complexity.
It is not surprising that it is ${\sf NP}$-hard  to determine whether a graph has a drawing with segment complexity $k$~\cite{dmnw-nmsdp-jgaa13}. However, we can still expect to prove bounds for certain graph classes.

Dujmovi{\'c} et al.~\cite{desw-dpgfs-cgta07} were the first to study drawings
with few segments and provided upper and lower bounds for several planar
graph classes. Since then, several new results have been provided~(\cite{dm-dptfs-cg19,hkms-dpgfg-jgaa18,ims-dpc3c-jgaa17,m-vgoto-phd16,mnbr-mscd3-jco13}, refer also to Table~\ref{tab:results}). 
These results shed only a little light on the area requirements of the drawings.  
In particular, in his thesis, Mondal~\cite{m-vgoto-phd16} gives an algorithm for triangulations
that produces drawings with $8n/3-O(1)$ segments on a grid of size $2^{O(n\log n )}$ in general and
 $2^{O(n)}$ for triangulations of bounded degree. Even with this large
grid, the algorithm uses
substantially more segments than the best-known algorithm for triangulations
without the grid requirement by
Durocher and Mondal~\cite{dm-dptfs-cg19}, which uses $7n/3-O(1)$ segments.
Recently, H\"ultenschmidt et al.~\cite{hkms-dpgfg-jgaa18} presented algorithms
that produce drawings with  $3n/4$ segments and $O(n^{3.58})$ area for trees, and
 $3n/2$ and $8n/3$ segments for outerplanar graphs and 3-trees, respectively, and $O(n^{3})$ area.    
 Igamberdiev et al.~\cite{ims-dpc3c-jgaa17} have provided an algorithm to construct 
drawings of planar cubic 3-connected graphs with $n/2$ segments and~$O(n^2)$ area.

\paragraph{Our Contribution.} 
In this paper, we concentrate on finding drawings with low segment complexity on a small grid.
Our contribution is summarized in Table~\ref{tab:results}.
In Section~\ref{sec:trees}, we show that every tree has a drawing with at most $3n/4-1$ segments on the $n\times n$
grid, improving the area bound by Hültenschmidt et al.~\cite{hkms-dpgfg-jgaa18}.
We then focus on drawing 3-connected planar graphs in Section~\ref{sec:triconnected}.
Using a combination of Schnyder realizers and orderly spanning trees, we show
that every 3-connected planar graph can be drawn  with 
$m-(n-4)/3\le (8n-14)/3$ segments on an $O(n)\times O(n^2)$ grid.
Finally, in Section~\ref{sec:others}, we use this result to draw on an~$O(n)\times O(n^2)$ grid 
maximal 4-connected graphs  with $5n/2-4$ segments, 
biconnected outerplanar graphs  with $(3n-3)/2$ segments, 
connected outerplanar graphs with $(7n-9)/4$ segments, 
and connected planar graphs with $(17n-38)/6$ segments.
All our proofs are constructive and yield algorithms to obtain such drawings
in $O(n)$ time.
As a side result, we also prove that the total number of leaves in every Schnyder 
realizer of a 3-connected planar graph is at most $2n+1$, which was 
only known for maximal planar graphs~\cite{b-wtr-icalp02,man-cdrsl-ijfcs05}.
 For the results on biconnected outerplanar 3- and 4-connected graphs, we use techniques that have been
used to construct monotone drawings; thus, as a side result, these drawings are also monotone\footnote{A path $P$ in a straight-line drawing of a graph is \emph{monotone} if there
	exists a line~$l$ such that the orthogonal projections of the vertices of~$P$ 
	on~$l$ appear along~$l$ in the order induced by~$P$. A drawing is monotone
	if there is a monotone path between every pair of vertices.}.

\vspace{+0.3cm}
We note that there are three trivial lower bounds for the segment complexity of a general graph~$G=(V,E)$ with~$n$ vertices and~$m$ edges:
\begin{enumerate*}[label=(\roman*)]
	\item $\vartheta/2$, where $\vartheta$ is the number of odd-degree vertices, 
	\item $\max_{v\in V} \lceil \deg(v)/2\rceil$, and
	\item $\lceil m/(n-1)\rceil$.
\end{enumerate*}
These trivial lower bounds are the same as for the slope number
of graphs~\cite{wc-dcgms-CJ94}, that is, the minimum number of slopes required
to draw all edges, and the slope number is upper
bounded by the number of segments required.

Relevant to segment complexity are the studies by Chaplick et al.~\cite{cflrvw-dgflf-GD16,cflrvw-cdgfl-WADS17} who consider drawings where all edges are to be covered by few lines (or planes);
the difference to our problem is that collinear segments are counted only once in their model.
In the same fashion, Kryven et al.~\cite{krw-dgfcf-CALDAM18} aim to cover all
edges by few circles (or spheres).

\setlength{\tabcolsep}{2.9pt}
\begin{table}[t]
\centering
\caption{Upper and lower bounds on the visual complexity of segment drawings.  
  Here, $n$ is the number of vertices, $m$ is the number of edges,~$\vartheta$ is the number of odd-degree vertices, and~$b$ is the number of maximal biconnected components. Constant-term
additions or subtractions have been omitted.
Entries marked by a \textbf{*} are monotone drawings.
\ifproc
FV corresponds to the full version~\cite{kmss-dpgfs-arxiv19}.\fi}
\label{tab:results}
\medskip
\begin{tabular}{>{\hspace{-\tabcolsep}\,}lrlrlcr<{\hspace{-\tabcolsep}}>{\hspace{-\tabcolsep}\,}rr<{\hspace{-\tabcolsep}}>{\hspace{-\tabcolsep}\,}c<{\hspace{-\tabcolsep}}>{\hspace{-\tabcolsep}\,}l<{\hspace{-\tabcolsep}}>{\hspace{-\tabcolsep}\,}r}
\toprule
Class & \multicolumn{4}{c}{Segments} && \multicolumn{6}{c}{Segments on the grid}\\[1ex]
\cmidrule{2-5}\cmidrule{7-12}
 & \multicolumn{2}{c}{Lower b.} & \multicolumn{2}{c}{Upper b.} && \multicolumn{2}{r}{Segments} & \multicolumn{3}{c}{Grid} & Ref.\\
\midrule
& & & & &&& $3n/4$ & $O(n^2)$&$\times$&$ O(n^{1.58})$ & \cite{hkms-dpgfg-jgaa18} \\

tree & $\vartheta/2$ & \cite{desw-dpgfs-cgta07}  &\bfseries $\vartheta/2$ & \cite{desw-dpgfs-cgta07} &&& \boldmath $3n/4$  & \boldmath $n$ & \boldmath $\times$ & \boldmath $n$ & Th.~\ref{thm:tree}\\

& & & & &&& $\vartheta/2$  & \multicolumn{3}{c}{quasipoly.}& \cite{hkms-dpgfg-jgaa18} \\

\rowcol max. outerplanar & $n$ & \cite{desw-dpgfs-cgta07} & $n$ & \cite{desw-dpgfs-cgta07} &&& $3n/2$  & $O(n)$&$\times$&$ O(n^2)$ & \cite{hkms-dpgfg-jgaa18}\\

& &  &  &&  &\textbf{*}& \boldmath $m-n/2$  & \boldmath $O(n)$ & \boldmath $\times$ & \boldmath $ O(n^2)$ & \ifproc FV \else Th. \ref{thm:outerbicon} \fi\\

 \multirow{-2}{*}{2-conn. outerplanar} & \multirow{-2}{*}{$n$} & \multirow{-2}{*}{\cite{desw-dpgfs-cgta07}}  &   &  &&\textbf{*}& \boldmath $3n/2$  & \boldmath $O(n)$ & \boldmath $\times$ & \boldmath $ O(n^2)$& \ifproc FV \else Cor. \ref{cor:outerbicon} \fi \\

\rowcol & & &  &  &&& \boldmath $3n/2+b$  & \boldmath $O(n)$ & \boldmath $\times$ & \boldmath $ O(n^2)$ & \ifproc FV \else Th.~\ref{thm:outer} \fi \\

\rowcol \multirow{-2}{*}{outerplanar} & \multirow{-2}{*}{$n$} & \multirow{-2}{*}{\cite{desw-dpgfs-cgta07}} &  &  &&& \boldmath $7n/4$  & \boldmath $O(n)$ & \boldmath $\times$ & \boldmath $ O(n^2)$ & \ifproc FV \else Cor.~\ref{cor:outer} \fi\\

2-tree & $3n/2$ & \cite{desw-dpgfs-cgta07} & $3n/2$ & \cite{desw-dpgfs-cgta07} &&&&& \\

\rowcol planar 3-tree & $2n$ & \cite{desw-dpgfs-cgta07} & $2n$ & \cite{desw-dpgfs-cgta07}&&& $8n/3$  & $O(n)$&$\times$&$ O(n^2)$ & \cite{hkms-dpgfg-jgaa18}\\

2-conn. planar & $2n$ & \cite{desw-dpgfs-cgta07} & $8n/3$ & \cite{dm-dptfs-cg19} &&& \multicolumn{4}{c}{\textbf{$\mathbf \rightarrow$ planar}}& \\

\rowcol & & & & &&\textbf{*}& \boldmath $m-n/3$  & \boldmath $O(n)$ & \boldmath $\times$ & \boldmath $ O(n^2)$ & Th. \ref{thm:3con}\\

\rowcol \multirow{-2}{*}{3-conn. planar } & \multirow{-2}{*}{$2n$} & \multirow{-2}{*}{\cite{desw-dpgfs-cgta07}} & \multirow{-2}{*}{$5n/2$} & \multirow{-2}{*}{\cite{desw-dpgfs-cgta07}} &&\textbf{*}& \boldmath $8n/3$ & \boldmath $O(n)$ & \boldmath $\times$ & \boldmath $ O(n^2)$ & Cor. \ref{cor:3con} \\

cubic 3-conn. planar  & $n/2$ & \cite{mnbr-mscd3-jco13} & $n/2$ & \cite{ims-dpc3c-jgaa17} &&& $n/2$  & $O(n)$&$\times$&$ O(n)$ & \cite{ims-dpc3c-jgaa17}\\

\rowcol triangulation & $2n$ & \cite{dm-dptfs-cg19} & $7n/3$ & \cite{dm-dptfs-cg19} &&\textbf{*}& \boldmath $8n/3$  & \boldmath $O(n)$ & \boldmath $\times$ & \boldmath $ O(n^2)$ & Cor. \ref{cor:3con}\\

4-conn. planar & $2n$ & \cite{dm-dptfs-cg19} & $21n/8$ & \cite{dm-dptfs-cg19} &&\textbf{*}& \multicolumn{4}{c}{\textbf{\boldmath $\rightarrow$ 3-conn.}} & \\

\rowcol 4-conn. triang. & $2n$ & \cite{dm-dptfs-cg19} & $9n/4$ & \cite{dm-dptfs-cg19} &&\textbf{*}& \boldmath $5n/2$  & \boldmath $O(n)$ & \boldmath $\times$ & \boldmath $ O(n^2)$ & \ifproc FV \else Th. \ref{thm:4con} \fi \\

& & & & && \multicolumn{2}{r}{\boldmath $17n/3-m$} & \boldmath $O(n)$ & \boldmath $\times$ & \boldmath $ O(n^2)$  & \ifproc FV \else Th. \ref{thm:planar} \fi \\

\multirow{-2}{*}{planar} & \multirow{-2}{*}{$2n$} & \multirow{-2}{*}{\cite{dm-dptfs-cg19}} & \multirow{-2}{*}{$8n/3$} & \multirow{-2}{*}{\cite{dm-dptfs-cg19}} &&& \boldmath $17n/6$  & \boldmath $O(n)$ & \boldmath $\times$ & \boldmath $ O(n^2)$ & \ifproc FV \else Cor. \ref{cor:planar} \fi \\
\bottomrule
\end{tabular}
\end{table}

\section{Trees}\label{sec:trees}

Let $T=(V,E)$ be a tree with $n$ vertices.
In this section, we describe an algorithm to draw~$T$ with at most $3n/4-1$
segments on an $n\times n$ grid in $O(n)$ time.

If $T$ consist only of vertices of degree 1 and 2, then 
it is a path and we can draw it with 1 segment and $n \times 1$ area.
So, we will assume that there is at least one vertex with higher degree. We choose such a vertex as
the root of~$T$. Denote the number of degree-2 vertices by $\beta$ and
the number of leaves by $\alpha$. In the first step, we create another tree $\Tred$
with $n-\beta$ vertices by contracting  all edges incident to a degree-2 vertex.
We say that a degree-2 vertex~$u$ \emph{belongs} to a vertex~$v$ if~$v$ is the
first descendent of~$u$ in~$T$ that has  degree greater than~2. 
Note that~$\Tred$ has the same number of leaves as~$T$.
In the next step, we remove all leaves from~$\Tred$ and obtain a tree~$\Trred$ with $n-\beta-\alpha$
vertices; see Fig.~\ref{fig:treeDeg2Contraction}.

\begin{figure}[t]
  \centering
	\subcaptionbox{$T$\label{fig:exampleTreeA}}{%
    \includegraphics[page=1]{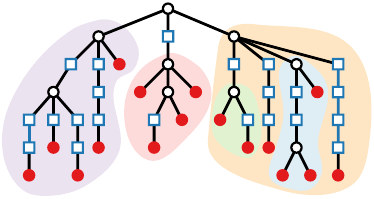}}
	\hfill
	\subcaptionbox{$\Tred$\label{fig:exampleTreeB}}{%
    \includegraphics[page=2]{figures/treeDrawing}}
	\hfill
	\subcaptionbox{$\Trred$\label{fig:exampleTreeC}}{%
    \includegraphics[page=3]{figures/treeDrawing}}
	\hfill
	\subcaptionbox{\label{fig:exampleTreeD}}{%
    \includegraphics[page=7]{figures/treeDrawing}}
  \caption{(a)~A tree $T$. Degree-2 vertices are squared, leaves are filled.
	(b)~The tree $\Tred$ obtained from $T$ by contracting the degree-2 vertices.
	(c)~The tree $\Trred$ obtained from $\Tred$ by removing all leaves.
  (d)~The drawing of our algorithm.}
  \label{fig:treeDeg2Contraction}
\end{figure}

The main idea of our algorithm is as follows. 
We draw~$\Trred$ with $n-\beta-\alpha-1$ segments.
Then, we add the $\alpha$ leaves in such a way that they either extend the 
segment of an edge, or that two of them share a segment, which results in at most $\alpha/2$
new segments. Finally, we place the~$\beta$ degree-2 vertices onto
the segments without increasing the number of segments. This way, we get a
drawing with at most $n-\beta-\alpha/2$ segments. Since~$\Tred$ has no degree-2 vertices,
more than half of its vertices are leaves, so $\alpha> (n-\beta)/2$.
Hence, the drawing has at most $3(n-\beta)/4< 3n/4$ segments.
Unfortunately, there are a few more details we have to take care of to achieve this bound.

Let~$v$ be a vertex in $\Trred$, and let $T[v]$ be the subtree of~$T$ rooted at~$v$.
Let~$n_v$ denote the number of vertices in $T[v]$. 
Let~$v_1,\ldots,v_k$ be the children of~$v$ in~$\Trred$. 
As induction hypotheses, we assume that each $T[v_i]$ is drawn inside a polygon~$B_i$
of \emph{dimensions} (edge lengths) $\ell_i,r_i,t_i,b_i, w_i,h_i$ as indicated in Fig.~\ref{fig:treeBox}
such that
\begin{enumerate}[nosep,label=$({\cal I}_{\arabic*})$]
	\item\label{invariant-box} no vertex of $T[v_i]$ lies to the top-left of~$v_i$, and
  \item\label{invariant-area} $B_i$ has area $n_i\times n_i$.
\end{enumerate}
 
Using three steps, we describe how to draw $T[v]$ inside a polygon $B_v$
of dimensions $\ell_v,r_v,t_v,b_v, w_v,h_v$ such that~$v$ lies at coordinate $(0,0)$. 
First, we place $T[v_1],\ldots,T[v_k]$. Second, we add the degree-2
vertices that belong to $v_1,\dots,v_k$. Finally, we add the leaf-children of $v$ and the degree-2 vertices belonging to them.

\noindent\textbf{Step 1.} We aim at placing $v_1$ directly below~$v$, and each polygon $B_{i},i\ge 2,$ to the right of
polygon~$B_{i-1}$, aligning~$v_i$ with the top boundary of~$B_{i-1}$;
see Fig.~\ref{fig:treeChildren}.
We place $v_1$ at coordinate $(0,-1-\sum_{i=1}^k t_i)$,
and each $v_i$ at coordinate $(x(v_{i-1})+r_{i-1}+\ell_i+1,y(v_{i-1})+t_{i-1})$,
where $x(v)$ and $y(v)$ are the $x$- and $y$-coordinates of~$v$, respectively.
 By invariant~\ref{invariant-area}, the total width and height of the drawings of $B_1,\dots,B_k$  are both at most $\sum_{i=1}^k n_{v_i}$. 

\noindent\textbf{Step 2.} Let~$\beta_i$ be the number of degree-2 vertices that belong to~$v_i$.
We move each polygon~$B_i$ downwards by~$\beta_i$, and place the degree-2 vertices
above~$v_i$; see Fig.~\ref{fig:treeChildrenDeg2}. This does not change the placement of any edge
of~$v$, the polygons are only moved downwards and are still disjoint, so the drawing remains planar.
The height of the drawing increases by at most
$\max_{i=1}^k\beta_i\le \sum_{i=1}^k\beta_i$
to $\sum_{i=1}^k (n_{v_i}+\beta_i)$, 
while the width remains $\sum_{i=1}^k n_{v_i}$.

\noindent\textbf{Step 3.}  
Let~$C_v$ the subtree of~$T[v]$ that consists of $v$, its leaf-children in~$\Tred$, and the degree-2 vertices belonging to them.
Let~$u_1,\ldots,u_a$ be the leaves of~$C_v$ and let~$\gamma_1,\ldots,\gamma_a$ be the number of degree-2 vertices that belong to them. Without lost of generality, assume that $\gamma_1\ge \ldots \ge \gamma_a$. 
We first consider the case where $a$ is even. We place the leaves alternatively to the bottom-left and to the top-right of~$v$ with as many rows between them and~$v$
as degree-2 vertices belong to them;
we draw each $u_{2i-1}$ and $u_{2i}$ on a segment through~$v$ with slope~$1/i$.
To this end, we place $u_{2i-1}$ at coordinate $(-(\gamma_{2i-1}+1)\cdot i,-\gamma_{2i-1}-1)$ %
and $u_{2i}$ at coordinate $((\gamma_{2i}+1)\cdot i,\gamma_{2i}+1)$ (recall that $u$ is placed at $(0,0)$).
We are able to place the degree-2 vertices that belong to these leaves between them and~$v$; see Fig.~\ref{fig:treeLeaves}.

If~$a$ is odd, then we apply the procedure described above for $u_1,\dots,u_{a-1}$. Vertex $u_a$ is placed as follows.
If~$v$ is a leaf in $\Trred$, then we place $u_a$ below~$v$ at coordinate $(0,-\gamma_a-1)$.
If~$v$ is not a leaf in $\Trred$, and no degree-2 vertex belongs to~$v$, and~$v$
is not the first child of its parent in~$\Trred$ (that is, there will be no edge that 
leaves~$v$ vertically above), then we place~$u_a$ above~$v$ at coordinate 
$(0,\gamma_a+1)$ such that it shares a segment with $(v,v_1)$. Otherwise, we place $u_a$ as every other vertex $u_i$ with odd index at coordinate $(-(\gamma_{a}+1)\cdot i,-\gamma_{a}-1)$. 

 By construction, the segments through $v$ drawn at step~3  cannot intersect $B_2,\ldots,B_k$, but there
might be an intersection between the segment from~$u_1$ to~$v$ and~$B_1$.
In this case, we move~$B_1$ downwards until the crossing disappears, which 
makes the drawing planar again. We call this action {\bf Step~4}. Thus, we have created a drawing of~$T[v]$
inside the polygon~$B_v$ that complies with invariant~\ref{invariant-box}. 
In the following,  we show that $B_v$ satisfies invariant~\ref{invariant-area}.

\begin{figure}[t]
  \centering
	\subcaptionbox{\label{fig:treeBox}}{%
    \includegraphics[page=1]{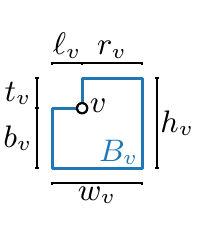}}
  \hfill
	\subcaptionbox{\label{fig:treeChildren}}{%
    \includegraphics[page=2]{treeBox}}
  \hfill
	\subcaptionbox{\label{fig:treeChildrenDeg2}}{%
    \includegraphics[page=3]{treeBox}}
  \hfill
	\subcaptionbox{\label{fig:treeLeaves}}{%
    \includegraphics[page=4]{treeBox}}
  \caption{Drawing of $T[v]$ with $k=4$. (a) $B_v$; (b) the children of~$v$ in $\Trred$;
    (c)~the degree-2 vertices belonging to these children;
    and (d) the remaining vertices of~$T[v]$ which form $C_v$.}
\end{figure}

  We analyze the width and height of the part of the
  drawing of $C_v$.
  Let  $\gamma^\mathrm{L}=\sum_{i=1}^{\lceil a/2\rceil} \gamma_{2i-1}$ and
  $\gamma^\mathrm{R}=\sum_{i=1}^{\lfloor a/2\rfloor} \gamma_{2i}$
  be the number of degree-2 vertices drawn to the left and right of~$v$, respectively,
  and let $\gamma=\gamma^\mathrm{L}+\gamma^\mathrm{R}$.
  
  Recall that $\gamma_1\ge\ldots\ge \gamma_a$ and leaf~$u_i$
  was placed at $y$-coordinate $\pm(\gamma_i+1)$.
  Hence, the vertices with the lowest and highest $y$-coordinate are~$u_1$ at 
  $y(u_1)=-\gamma_1-1$ and~$u_2$ 
  at $y(u_2)=\gamma_2+1$, respectively. Thus, the height of the drawing of~$C_v$ is 
  $1=1+a+\gamma$ if $a=0$; $2+\gamma_1=1+a+\gamma$ if $a=1$;
  and $3+\gamma_1+\gamma_2\le 1+a+\gamma$ if $a\ge 2$,
  so at most $1+a+\gamma$ in total.
  
  For analyzing the width of the drawing of~$C_v$,
  we first consider those vertices that are drawn to the right of~$v$.
  Let~$r$ be such that $u_{2r}$ is the rightmost vertex
  at $x$-coordinate $(\gamma_{2r}+1)\cdot r$.
  Since $\gamma_1\ge\ldots\ge \gamma_a$,
  we have that 
  \[ \gamma^\mathrm{R}=\sum_{i=1}^{\lfloor a/2\rfloor} \gamma_{2i}\ge \sum_{i=1}^{r} \gamma_{2i}
  \ge r \cdot \gamma_{2r}.\]
  Symmetrically, 
  let~$\ell$ be such that $u_{2\ell-1}$ is the leftmost vertex
  at $x$-coordinate $-(\gamma_{2\ell-1}+1)\cdot \ell$.
  We have that 
  \[\gamma^\mathrm{L}=\sum_{i=1}^{\lceil a/2\rceil} \gamma_{2i-1}\ge \sum_{i=1}^{\ell} \gamma_{2i-1}
  \ge \ell \cdot \gamma_{2\ell -1}.\]
  Hence, the total width of this part of the drawing is at most
  \[1+(\gamma_{2r}+1)\cdot r + (\gamma_{2\ell-1}+1)\cdot \ell\le 1+\ell+r+\gamma^\mathrm{L}+\gamma^\mathrm{R}\le 1+a+\gamma.\]

  Recall that before step~3 the width of the drawing of $T[v]$ was $\sum_{i=1}^k{n_{v_i}}$
  and the height was at most $\sum_{i=1}^k(n_{v_i}+\beta_i)$. 
  In step~3, the width increases by at most $1+a+\gamma$. 
  In step~4, we move the drawing of $T[v_1]$ downwards
  if it is crossed by the segment between $u_1$ and~$v$ until this crossing is
  resolved. There cannot be a crossing if $y(u_1)>y(v_1)$, 
  so we move it by at most $|y(u_1)|$ downwards, which is exactly the height of 
  the part of the drawing of~$C_v$ that lies below~$v$.
  Hence, the height in Steps~3 and 4  increases by at most the height of the drawing
  of~$C_v$, which is $1+a+\gamma$.
 Since $n_v=1+\sum_{i=1}^k (n_{v_i}+\beta_i)+a+\gamma$, the width and the height of $B_v$ is at most $n_v$.
  With this we complete the proof of invariant~\ref{invariant-area}.

We will now discuss the number of segments in $T$. Let~$r$ be the root of~$T$, and let~$v\in \Trred\setminus\{r\}$.
We need a few definitions;
see Fig.~\ref{fig:tree-segments}. Let $p_v$ be the parent of $v$
in $\Trred$. Let $P_v$ be the path between $v$ and $p_v$ in $T$;
let~$T^+[v] = T[v] \cup P_v$; let~$n_v^+$ be the number of vertices in~$T^+[v] \setminus \{p_v\}$;
let~$e_v$ be the edge of $P_v$ incident to $p_v$;
and let~$s_v$ be the number of segments used in the drawing of~$T^+[v]$.

\wormhole{treeSegments}
\newcommand{\treeSegmentsText}{%
  For any vertex~$v\neq r$ of~$\Trred$, if~$e_v$ is drawn vertical, then
  $s_v\le (3n^+_v-1)/4$, otherwise $s_v\le 3n^+_v/4$.}
\begin{lemma}\label{lem:tree-segments-internal}
  \treeSegmentsText
\end{lemma}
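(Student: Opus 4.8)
The plan is to prove the statement by induction on the structure of $\Trred$, processing its vertices from the leaves up to (but not including) the root $r$. Fix $v\neq r$ with children $v_1,\dots,v_k$ in $\Trred$, let $a$ and $\gamma=\gamma^\mathrm{L}+\gamma^\mathrm{R}$ be as in Step~3, and let $\beta_v$ be the number of degree-2 vertices that belong to $v$; these are exactly the internal vertices of $P_v$, so $n_v^+=n_v+\beta_v$. Since the leaf-children and their chains contribute $a$ and $\gamma$ vertices respectively, while the chains belonging to the $v_i$ are already accounted for inside $n_{v_i}^+=n_{v_i}+\beta_i$, unfolding the vertex count gives the recurrence
\[ n_v^+ = 1 + \sum_{i=1}^k n_{v_i}^+ + a + \gamma + \beta_v. \]
Correspondingly I would write $s_v\le\sum_{i=1}^k s_{v_i}+N$, where $N$ counts the segments created at $v$ itself, namely those spanned by $C_v$ together with the segment carrying $P_v$; here one uses that no two edges entering $v$ from distinct subtrees are collinear (if they were, $N$ would only drop).

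Two structural observations drive the bound. First, the first child $v_1$ is placed directly below $v$, making $e_{v_1}$ vertical, so the induction hypothesis yields the stronger estimate $s_{v_1}\le(3n_{v_1}^+-1)/4$, whereas $e_{v_i}$ is non-vertical for $i\ge 2$. Second, the degree-2 vertices counted by $\gamma$ and $\beta_v$ sit on already-existing segments and hence cost nothing, yet each contributes $3/4$ to the target budget. Plugging the hypotheses into $s_v\le\sum_i s_{v_i}+N$ and substituting the recurrence for $n_v^+$, the whole claim reduces to the single inequality
\[ N \le 1 + \tfrac34(a+\gamma+\beta_v) - \tfrac14\,[\,e_v\text{ vertical}\,]. \]
The $-1/4$ demanded on the right when $e_v$ is vertical is supplied precisely by the $-1/4$ that the vertical first child $v_1$ donates through the first observation.

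It remains to estimate $N=N_{C_v}+N_{P_v}$. For internal $v$ the path contributes $N_{P_v}=0$ when $e_v$ is vertical, since then $P_v$ leaves $v$ straight up and fuses with the vertical segment of $P_{v_1}$ into one segment, and $N_{P_v}=1$ otherwise; thus it suffices to bound $N_{C_v}$ by $\tfrac34(1+a+\gamma+\beta_v)$ in the vertical case and by $\tfrac34(a+\gamma+\beta_v)$ in the non-vertical case. The leaf pairing of Step~3 gives $N_{C_v}=a/2$ for even $a$ and $N_{C_v}\le(a+1)/2$ for odd $a$, dropping to $(a-1)/2$ exactly in the special branch where $u_a$ is absorbed into the segment of $(v,v_1)$. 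I expect the bookkeeping of the odd case to be the main obstacle: one must check, across the branches of Step~3, that these estimates meet the two budgets, the tight configurations being rescued either by the leaf-pairing savings or by the $\tfrac34$-rate contribution of a degree-2 vertex—note that the non-special odd branch forces $\beta_v\ge 1$ whenever $e_v$ is non-vertical and $v$ is internal, which is exactly the slack needed there. Finally, the base case $v$ a leaf of $\Trred$ has $k=0$ and so lacks the first-child donation; here one uses that $\deg(v)\ge 3$ in $T$, hence $a\ge 2$, together with the fact that for odd $a$ with vertical $e_v$ the leaf $u_a$ placed below $v$ fuses with $P_v$, to close the argument directly.
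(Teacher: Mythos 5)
Your plan is essentially the paper's own proof: induction over $\Trred$ in which the vertical edge to the first child supplies the $-1/4$ credit, paired leaves cost $1/2$ each, degree-2 vertices ride on existing segments for free, and the verification splits on the parity of $a$, the verticality of $e_v$, whether $v$ is a leaf of $\Trred$, and whether $\beta_v>0$. Your reduction to the local inequality $N\le 1+\tfrac34(a+\gamma+\beta_v)-\tfrac14\,[\,e_v\text{ vertical}\,]$ is correct, and the bookkeeping you defer does close in every branch (the tight configurations being exactly the ones you single out), so this is the same argument in a slightly more systematic packaging.
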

\begin{proof}
  We prove the lemma by induction on the height of~$\Trred$,
	so we can assume
  that the bound holds for all children of~$v$ in~$\Trred$. 
  Recall that $u_1,\ldots,u_a$ are the leaf-children of~$v$ in $\Tred$,
  $v_1,\ldots,v_k$ are the children of~$v$ in~$\Trred$,
  and~$v_1$ is connected to~$v$ by a vertical segment. 
  Let~$b$ be the number of degree-2 vertices that belong to~$v$. 
  Let~$n'=\sum_{i=1}^k n^+_{v_i}$; then, $n_v^+\ge n'+a+b+1$. 
  (There might be degree-2 vertices between~$v$ and its leaf-children in~$\Tred$
  which we do not count.)
  By induction, $s_{v_1}\le 3(n^+_{v_1}-1)/4$ and
  $s_{v_i}\le 3n^+_{v_i}/4$ for $2\le i\le k$, so
  $\sum_{i=1}^k s_{v_i}\le (3n'-1)/4$.
  It remains to analyze the number of segments for $C_v$ and for the path $P_v$.

  \begin{figure}[t]
  \centering
    \includegraphics[page=5]{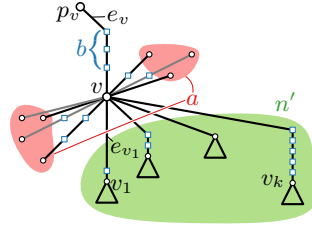}
    \caption{Illustration of $T^+[v]$ in the proof of Lemma~\ref{lem:tree-segments-internal}.}
    \label{fig:tree-segments}
  \end{figure}

  \widowpenalty0
  \clubpenalty0

  \ccase{c:leaf-nodeg2} $v$ is a leaf in~$\Trred$ and $b=0$. Then, $n^+_v\ge a+1$.
    Since~$v$ is a leaf in~$\Trred$, it has at least two children in~$T$,
    so $a\ge 2$.
  
  \subcase{sc:leaf-nodeg2-even} $a$ is even. 
    We use~$a/2$ for~$C_v$ plus one for the
    edge~$e_v$. Thus, 
    $s_v
    \le a/2+1
    \le (n^+_v-1)/2+1
    =(3n^+_v-n^+_v+2)/4
    \le (3n^+_v-1)/4$
    since $n^+_v\ge 3$.
  
  \subcase{sc:leaf-nodeg2-odd-vertical} $a$ is odd and~$e_v$ is vertical,
    so~$a\ge 3$ and $n^+_v\ge 4$. We use $(a-1)/2$ segments for $u_1,\ldots,u_{a-1}$
    and one segment for $u_a$ and~$e_v$. Thus,
    $s_v
    \le (a-1)/2+1
    \le n^+_v/2
    \le 3n^+_v/4-1$.
  
  \subcase{sc:leaf-nodeg2-odd-diagonal} $a$ is odd and~$e_v$ is not vertical.
    We use one more segment than in Case~\ref{sc:leaf-nodeg2-odd-vertical},
    so $s_v \le 3n^+_v/4$.
  
  \ccase{c:leaf-deg2} $v$ is a leaf in~$\Trred$ and $b>0$. Then, $a\ge 2$
    and $n^+_v\ge a+b+1\ge a+2\ge 4$
  
  \subcase{sc:leaf-deg2-even-vertical} $a$ is even and $e_v$ is vertical. 
    We use~$a/2$ segments for $u_1,\ldots,u_2$,
    and the degree-2 vertices that belong to~$v$ lie on a vertical segment with~$e_v$.
    Hence, we have $s_v\le a/2+1\le n^+_v/2\le 3n^+_v/4-1$.
  
  \subcase{sc:leaf-deg2-even-diagonal} $a$ is even and~$e_v$ is not vertical.
    We again have $n^+_v\ge 4$. The degree-2 vertices that belong to~$v$ now
    lie on a different segment than~$e_v$, so we have one more segment than
    in Case~\ref{sc:leaf-deg2-even-vertical}, so $s_v\le 3n/4$.
    
  \subcase{sc:leaf-deg2-odd} $a$ is odd. We have~$a\ge 3$ and thus $n^+_v\ge 5$. 
    We have drawn $u_1,\ldots,u_{a-1}$ paired up. We have drawn~$u_a$ on
    a vertical segment with the degree-2 vertices that belong
    to~$v$, and we have possibly one more segment for~$e_v$. Hence, we have
    $s_v\le (a-1)/2+2\le (n^+_v+1)/2
    = (3n_v^+ -n_v^+ +2)/4
    \le (3n_v^+ -3)/4$.
  
  \ccase{c:noleaf-nodeg2} $v$ is not a leaf in~$\Trred$ and $b=0$.
    We have $n^+_v\ge n'+a+1$, so $n'\le n^+_v-a-1$.
  
  \subcase{sc:noleaf-nodeg2-0-vertical} $a=0$ and~$e_v$ is vertical. 
    Then, $n_v^+=n'+1$  and~$e_v$ lies on a vertical segment with the edge $e_{v_1}$. 
    Hence, $s_v\le (3n'-1)/4=(3n^+_v-4)/4$.
  
  \subcase{sc:noleaf-nodeg2-0-diagonal} $a=0$ and~$e_v$ is not vertical. 
    Again, $n^+_v=n'+1$. We use one segment for~$e_v$,
    so we have $s_v\le (3n'-1)/4+1=3n^+_v/4$.
  
  \subcase{sc:noleaf-nodeg2-even} $a\ge 2$ is even. 
    We use $a/2$ segments for $C_v$ %
    and one more for~$e_v$.
    Hence, 
    $s_v
    \le (3n'-1)/4+a/2+1
    =(3n'+2a+3)/4
    \le (3n^+_v-a)/4
    \le (3n^+_v-2)/4$.
    
  \subcase{sc:noleaf-nodeg2-odd-vertical} $a$ is odd and~$e_v$ is vertical.
    We use $(a+1)/2$ segments for $u_1,\ldots,u_a$, but~$e_v$ shares its vertical
    segment with $e_{v_1}$. Hence, 
    $s_v
    \le (3n'-1)/4+(a+1)/2 
    = (3n'+2a+1)/4)
    \le (3n^+_v-a-2)/4
    \le (3n^+_v-3)/4$.
    
  \subcase{sc:noleaf-nodeg2-odd-diagonal} $a$ is odd and~$e_v$ is not vertical.
    In this case, we place~$u_a$ above~$v$ such that it lies on a segment with~$e_{v_1}$.
    We use $(a-1)/2$ segments
    for $u_1,\ldots,u_{a-1}$ and one segment for~$e_v$, so we have the same number of segments as in Case~\ref{sc:noleaf-nodeg2-odd-vertical}.
  
  \ccase{c:noleaf-deg2} $v$ is not a leaf in~$\Trred$ and $b>0$. 
    We have $n^+_v\ge n'+a+b+1\ge n'+a+2$.
    
  \subcase{sc:noleaf-deg2-even} $a$ is even.
    We use $a/2$ segments for $u_1,\ldots,u_a$.  The edges of the path $P_v$ %
    share a vertical segment with~$e_{v_1}$. We use at most one more
    segment for~$e_v$, so
    $s_v
    \le (3n'-1)/4+a/2+1
    =(3n'+2a+3)/4
    \le (3n^+_v-a-3)/4
    \le (3n^+_v-3)/4$.
    
  \subcase{sc:noleaf-deg2-odd-vertical} $a$ is odd and~$e_v$ is vertical.
    We use the exact same number of segments %
    as in Case~\ref{sc:noleaf-nodeg2-odd-vertical},
    so $s_v\le (3n^+_v-3)/4$.
    
  \subcase{sc:noleaf-deg2-odd-diagonal} $a$ is odd and~$e_v$ is not vertical.
    We use $(a+1)/2$ segments for $u_1,\ldots,u_a$.  The edges of the path $P_v$   %
    share a vertical segment with~$e_{v_1}$, and we need one more
    segment for~$e_v$. Hence,
    $s_v
    \le (3n'-1)/4+(a+1)/2+1
    =(3n'+2a+5)/4
    \le (3n^+_v-a-1)/4
    \le (3n^+_v-2)/4$.
\end{proof}

\widowpenalty10000
\clubpenalty10000

Now we can bound the total number of segments in the drawing of~$T$.

\begin{lemma}\label{lem:tree-segments}
  Our algorithm draws~$T$ with at most $3n/4-1$ segments if $n\ge 3$.
\end{lemma}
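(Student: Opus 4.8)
The plan is to count the segments of the whole drawing $T=T[r]$ by applying Lemma~\ref{lem:tree-segments-internal} to the children of the root and then accounting separately for the star $C_r$ around~$r$. The root is special in two ways that I would exploit: it has no parent, so there is no path $P_r$ and no edge $e_r$ to pay for, and since $r$ has no ancestors, no degree-2 vertex belongs to~$r$ (so the quantity $b$ is $0$ at the root). Writing $v_1,\dots,v_k$ for the children of~$r$ in~$\Trred$ with $v_1$ joined to~$r$ by the vertical segment, $u_1,\dots,u_a$ for its leaf-children in~$\Tred$, and $\gamma$ for the number of degree-2 vertices lying on the edges to these leaves, the drawing of~$T$ is the union of the drawings of $T^+[v_1],\dots,T^+[v_k]$ and of~$C_r$, which meet only at~$r$; hence $n=1+n'+a+\gamma$ with $n'=\sum_{i=1}^k n_{v_i}^+$.

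First I would dispose of the degenerate case: if $T$ is a path it is drawn with one segment, and $1\le 3n/4-1$ for $n\ge 3$, so I may assume that $r$ has degree greater than~$2$. Then I split into two cases according to whether $r$ is a leaf of~$\Trred$. If $r$ is a leaf of~$\Trred$ (so $k=0$ and $T=C_r$ with $a\ge 3$), the algorithm pairs up the leaves and uses $\lceil a/2\rceil$ segments; substituting $n=1+a+\gamma\ge 1+a$ it remains to check $\lceil a/2\rceil\le (3a-1)/4$, which holds for every $a\ge 3$. If $r$ is internal in~$\Trred$ (so $k\ge 1$), Lemma~\ref{lem:tree-segments-internal} gives $\sum_{i=1}^k s_{v_i}\le(3n'-1)/4$, using that $e_{v_1}$ is vertical. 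The star $C_r$ then contributes only $\lfloor a/2\rfloor$ further segments: the $\lfloor a/2\rfloor$ leaf pairs cost one segment each, and when $a$ is odd the extra leaf $u_a$ is placed vertically above~$r$ so that it extends the vertical segment already containing $e_{v_1}$, costing nothing.

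Combining these bounds with $n'\le n-1-a$ (from $n=1+n'+a+\gamma$ and $\gamma\ge 0$) yields $S\le (3n'-1)/4+\lfloor a/2\rfloor\le (3n-4-3a)/4 + a/2 = (3n-4-a)/4\le 3n/4-1$, which covers the case $a=0$ as well. The arithmetic in each subcase is routine once the counts are set up, and the boundary condition $n\ge 3$ is only needed to justify the degenerate path case and to guarantee $a\ge 3$ when $r$ is a leaf of~$\Trred$.

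I expect the main obstacle to be the bookkeeping of which segment the odd leaf $u_a$ merges with at the root: one has to argue that, because $r$ has no incoming edge $e_r$ and no degree-2 vertex belongs to it, the region directly above~$r$ is free, so $u_a$ can be made collinear with the downward vertical edge $e_{v_1}$ and absorbed into its segment rather than starting a new one. This is precisely the saving that distinguishes the root from the internal subcase~\ref{sc:noleaf-nodeg2-odd-diagonal} of Lemma~\ref{lem:tree-segments-internal}, where that vertical slot may instead be occupied by the edge $e_v$ to the parent; everything else reduces to the two short computations above.
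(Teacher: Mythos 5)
Your proposal is correct and follows essentially the same route as the paper: dispose of the path and star (i.e.\ $\Trred$ trivial) cases, apply Lemma~\ref{lem:tree-segments-internal} to the children of the root to get the $(3n'-1)/4$ bound, charge $\lfloor a/2\rfloor$ segments to the root's star by absorbing the odd leaf into the vertical segment of $e_{v_1}$, and conclude via $n\ge n'+a+1$. The only difference is cosmetic: you spell out the arithmetic for the star case and the reason the root saves the segment for $e_r$, which the paper leaves implicit.
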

\begin{proof}
  If~$T$ is a path with $n\ge 3$, then the bound trivially holds. If~$T$ is a subdivision
  of a star, then the bound also clearly holds. Otherwise,~$\Trred$ consists of
  more than one vertex. 
  Let~$v_1,\ldots,v_k$ be the children of the root~$r$ of~$\Trred$ such that~$v_1$
  is connected by a vertical edge. Recall that ~$n'=\sum_{i=1}^k n^+_{v_i}$. 
  By Lemma~\ref{lem:tree-segments-internal}, the subtrees $T[v_i]^+$, $i=1,\dots, k$ contribute at most
  $(3n'-1)/4$ segments to the drawing of~$T$.
  Let~$a$ be the number of leaf children of~$r$ in~$\Tred$.
  If~$a$ is even, then we use~$a/2$ segments to draw them. If~$a$ is odd,
  then we align one of them with the vertical segment of $v_1$, and draw the
  remaining with $(a-1)/2$ segments. Since~$n\ge n'+a+1$, the total number
  of segments is at most $(3n'-1)/4+a/2\le 3n/4-a/4-1\le 3n/4-1$.
\end{proof}

All steps of the algorithm work in linear time. Sorting the 
leaf-children by the number of degree-2 vertices belonging to them
can also be done in linear time with, e.g., CountingSort, as the numbers
are bounded by~$n$. 
Thus, Theorem~\ref{thm:tree} follows.
Fig.~\ref{fig:exampleTreeD} shows the result of our algorithm for the tree 
of Fig.~\ref{fig:exampleTreeA}.

\begin{theorem}\label{thm:tree}
  Any tree with $n\ge 3$ vertices can be drawn planar on an $n\times n$ grid with
  $3n/4-1$ segments in $O(n)$ time.
\end{theorem}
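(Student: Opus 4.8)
The plan is to read Theorem~\ref{thm:tree} as the endpoint of the inductive construction of Section~\ref{sec:trees}, so that its three assertions---planarity, the $n\times n$ grid, and the segment bound---each follow from a piece of machinery already in place, and the only remaining work is to assemble them. First I would dispatch the degenerate cases exactly as the construction does: if $T$ is a path or a subdivision of a star, it is drawn with a single segment on a thin grid, so I may assume that $\Trred$ has more than one vertex and that $T$ is rooted at a vertex of degree at least three. For the segment count I would then simply invoke Lemma~\ref{lem:tree-segments}, which already states that the algorithm produces at most $3n/4-1$ segments whenever $n\ge 3$; no further estimate is needed.

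For the grid bound I would apply invariant~\ref{invariant-area} to the root. Since $T[r]=T$ and $n_r=n$, the bounding polygon $B_r$ has width and height at most $n$. Every placement rule in Steps~1--4 assigns integer coordinates and only translates already-drawn subpolygons by integer offsets, so the whole drawing lies on the integer grid; a final translation into the first quadrant places it on the $n\times n$ grid. Planarity is inherited from the construction: Steps~1 and~2 keep the subpolygons $B_1,\dots,B_k$ pairwise disjoint, Step~3 adds the leaves of $C_v$ on segments through $v$ that avoid $B_2,\dots,B_k$, and Step~4 removes the only possible remaining crossing by sliding $B_1$ downward.

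For the running time I would observe that the recursion touches each vertex a constant number of times, so it suffices to bound the cost of the one nontrivial bookkeeping step: sorting, at each vertex $v$, its leaf-children $u_1,\dots,u_a$ by the number $\gamma_i$ of degree-2 vertices belonging to them so that $\gamma_1\ge\cdots\ge\gamma_a$. Since all the $\gamma_i$ are bounded by $n$, counting sort performs all these sorts in total time $O(n)$, giving an overall linear-time algorithm.

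The genuinely hard part of the argument has already been absorbed into the two preceding lemmas and the proof of invariant~\ref{invariant-area}: the case analysis in Lemma~\ref{lem:tree-segments-internal} is what makes the $3n/4$ bound survive through the recursion, and the width/height accounting of Steps~3 and~4 is what keeps the drawing inside an $n_v\times n_v$ box. For the theorem itself the only point requiring care is that the $n\times n$ grid claim really does follow from invariant~\ref{invariant-area} at the root, i.e., that the downward shifts in Steps~2 and~4 never force the height above $n_v$; this is exactly the content of the identity $n_v=1+\sum_{i=1}^k(n_{v_i}+\beta_i)+a+\gamma$ established in the area analysis.
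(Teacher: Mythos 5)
Your proposal is correct and follows essentially the same route as the paper, which likewise obtains Theorem~\ref{thm:tree} by combining Lemma~\ref{lem:tree-segments} for the segment count, invariant~\ref{invariant-area} at the root for the $n\times n$ grid, the planarity guaranteed by Steps~1--4, and a counting-sort argument for the $O(n)$ running time. The only minor slip is the claim that a subdivision of a star is drawn with a single segment (it needs about half as many segments as the star has leaves), but this does not affect the argument since the segment bound is delegated to Lemma~\ref{lem:tree-segments}.
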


\section{3-connected planar graphs}\label{sec:triconnected}

In this section, we present an algorithm to compute planar drawings 
with at most $(8n-14)/3$ segments for 3-connected planar graphs. 

\begin{figure}[t]
  \begin{minipage}[b]{.3\columnwidth}
    \centering
    \includegraphics{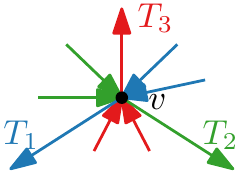}
    \caption{Edges in a Schnyder realizer.}
    \label{fig:schnyder-order}
  \end{minipage}
  \hfill
  \begin{minipage}[b]{.33\columnwidth}
    \centering
    \includegraphics{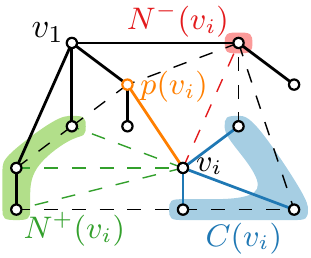}
    \caption{Definition of orderly spanning tree (bold).}
    \label{fig:orderly}
  \end{minipage}
  \hfill
  \begin{minipage}[b]{.3\columnwidth}
    \centering
    \includegraphics{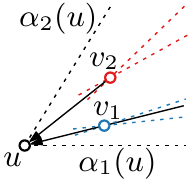}
    \caption{Definition of slope-disjointness.}
    \label{fig:slopedisjoint}
  \end{minipage}
\end{figure}

Let $G$ be a triangulation. Let $v_1,v_2,v_3$ be the vertices of the outer face.
We decompose the interior edges into three \emph{Schnyder trees} $T_1$, $T_2$, and $T_3$ rooted
at $v_1$, $v_2$, and $v_3$, respectively. The edges of the trees are oriented 
towards their roots. For $k\in\{1,2,3\}$, we call each edge in~$T_k$ a \emph{$k$-edge}
and the parent of a vertex in~$T_k$ its \emph{$k$-parent}.
The decomposition is a \emph{Schnyder realizer}~\cite{s-epgg-soda90} if at every 
interior vertex the edges are counter-clockwise ordered as: outgoing 1-edge, 
incoming 3-edges, outgoing 2-edge, incoming 1-edges, outgoing 3-edge, 
and incoming 2-edges; see Fig~\ref{fig:schnyder-order}.
A Schnyder tree~$T_k$ also contains the exterior edges of~$v_k$,
so each exterior edges lies in two Schnyder trees and each~$v_k$ is a leaf in
the other two Schnyder trees; hence, each Schnyder tree is a spanning tree.

For 3-connected planar graphs, Schnyder realizers also exist~\cite{DiBattista1999,FelsnerZ08}, but the interior edges
can be \emph{bidirected}: an edge $(u,v)$ is bidirected if it is an outgoing
$i$-edge at~$u$ and an outgoing $j$-edge at~$v$ with $i\neq j$. 
All other edges are \emph{unidirected}, that is, they are an outgoing $i$-edge
at~$u$ and an incoming $i$-edge at~$v$ (or vice-versa).
The restriction on the 
cyclic ordering around each vertex remains the same, but now the Schnyder
trees are not necessarily edge-disjoint.

Chiang et al.~\cite{cll-ostwa-sicomp05} have introduced the notion of 
\emph{orderly spanning trees}.
Recently, orderly spanning trees were redefined by Hossain and 
Rahman~\cite{hr-gstgd-tcs15} as \emph{good spanning trees}. We will use the
definition by Chiang et al., but note that these two definitions are equivalent. 
Two vertices in a rooted spanning tree are \emph{unrelated} if neither of them
is an ancestor of the other one. 
A tree is \emph{ordered} if the circular order of the edges around each vertex is fixed.
Let $G=(V,E)$ be a plane graph and let~$r\in V$ lie on the outer face.
Let~$T$ be an ordered spanning tree of~$G$ rooted at~$r$ 
that respects the embedding of~$G$.
Let $v_1,\ldots,v_n$ be the vertices of~$T$ as encountered in a counter-clockwise
pre-order traversal. For any vertex~$v_i$, let $p(v_i)$ be its parent in~$T$,
let $C(v_i)$ be the children of~$v$ in~$T$, let $N(v_i)$ be the neighbors
of~$v_i$ in $G$ that are unrelated to~$v_i$; see Fig.~\ref{fig:orderly}. Further, let $N^-(v_i)=\{v_j\in N(v_i)\mid j<i\}$
and $N^+(v_i)=\{v_j\in N(v_i) \mid j>i\}$. Then,~$T$ is called \emph{orderly} if
the neighbors around every vertex~$v_i$ are in counter-clockwise order
$p(v_i)$, $N^-(v_i)$, $C(v_i)$, $N^+(v_i)$. 
In particular, this means that there
is no edge in~$G$ between~$v_i$ and an ancestor in~$T$ that is not its parent and
there is no edge in~$G$ between~$v_i$ and a descendent in~$T$ that is not its child.
This fact is crucial, as it allows us to draw a path in an orderly spanning tree
on a single segment without introducing overlapping edges.

Angelini et al.~\cite{acbfp-mdg-12} have introduced the notion of a
\emph{slope-disjoint} drawing of a rooted tree~$T$, which is defined as follows;
see Fig.~\ref{fig:slopedisjoint}. 
\begin{enumerate}[label=(S\arabic*)]
	\item For every vertex $u$ in $T$, there exist two slopes~$\alpha_1(u)$ and~$\alpha_2(u)$ 
    with $0<\alpha_1(u)<\alpha_2(u)<\pi$, such that, for every edge~$e$ that is
    either $(p(u),u)$ or lies in~$T[u]$, it holds that $\alpha_1(u)<\slope(e)<\alpha_2(u)$;
  \item for every directed edge $(v,u)$ in~$T$, it holds that
    $\alpha_1(u)<\alpha_1(v)<\alpha_2(v)<\alpha_2(u)$
    (recall that edges are directed towards the root); and
  \item for every two vertices $u,v$ in $T$ with $p(u)=p(v)$, it holds that either
    $\alpha_1(u)<\alpha_2(u)<\alpha_1(v)<\alpha_2(v)$ or
    $\alpha_1(v)<\alpha_2(v)<\alpha_1(u)<\alpha_2(u)$.
\end{enumerate}

\begin{lemma}[\cite{acbfp-mdg-12}]\label{lem:slope-disjoint}
  Every slope-disjoint drawing of a tree is planar and monotone.
\end{lemma}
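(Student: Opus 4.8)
Every slope-disjoint drawing of a tree is planar and monotone.

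The plan is to establish planarity and monotonicity separately, using the three slope-disjointness conditions (S1)--(S3) as the sole hypotheses and exploiting the recursive structure of the subtrees $T[u]$. The key observation driving both parts is that condition (S1) confines every edge of $T[u]$ to lie strictly inside the open slope-cone $(\alpha_1(u),\alpha_2(u))$, while (S2) forces these cones to be strictly nested as one moves toward the root, and (S3) forces the cones of siblings to be disjoint. Together these say that each subtree is drawn inside a wedge of directions that shrinks as we descend and that sibling wedges do not overlap in slope.

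For \emph{monotonicity}, I would argue that for any two vertices $x,y$ the tree path between them is monotone with respect to a suitably chosen line. The natural case split is on whether one of $x,y$ is an ancestor of the other. If $y$ lies in $T[x]$, then every edge on the path from $x$ to $y$ lies in $T[x]$, hence by (S1) every such edge has slope strictly between $\alpha_1(x)$ and $\alpha_2(x)$; choosing a line $\ell$ whose direction is, say, the angle bisector $(\alpha_1(x)+\alpha_2(x))/2$ rotated by $\pi/2$, all these edges have a strictly positive projection onto $\ell$, so the path is monotone. If neither is an ancestor of the other, let $z$ be their lowest common ancestor; the path splits into a descending part from $x$ up to $z$ and a descending part from $z$ down to $y$. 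Here I would use (S2): along the upward path the slopes increase the cones monotonically, so all edges of the combined path still fall within the cone of $z$, and the same projection argument applies. The subtle point is ensuring the two halves project consistently in the same direction; this is where the strict nesting $\alpha_1(z)<\alpha_1(\cdot)<\alpha_2(\cdot)<\alpha_2(z)$ is needed so that every edge of the whole path lies in the single cone $(\alpha_1(z),\alpha_2(z))$.

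For \emph{planarity}, I would induct on the height of $T$. The inductive claim is that the drawing of $T[u]$ is planar and moreover is entirely contained in the cone emanating from $u$ with opening $(\alpha_1(u),\alpha_2(u))$. By induction each child subtree $T[v_i]$ is planar and lies in the cone of $v_i$; by (S3) these sibling cones are pairwise slope-disjoint, so the subtree drawings occupy angularly separated wedges around $u$ and cannot cross one another. It remains to check that the edge $(u,v_i)$ from $u$ to each child does not cross any subtree drawing other than its own: by (S2) this edge has slope in $(\alpha_1(v_i),\alpha_2(v_i))$, the same wedge that contains all of $T[v_i]$ and is disjoint from the wedges of the other children, so no new crossing is introduced.

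The main obstacle is making the angular/wedge containment argument rigorous as a geometric statement rather than merely a statement about slopes: slope-disjointness controls \emph{directions} but planarity is about the actual positions of points, so I must convert the strict slope nesting into a genuine separation of the convex cones in the plane and verify that a subtree confined to a shrinking cone at its apex stays inside the parent's cone once translated to hang off the parent. This is precisely the content that (S1) and (S2) together guarantee — an edge entering $u$ from below keeps the whole subtree within $u$'s cone — and the strictness of all the inequalities is what prevents degenerate touching. I would phrase the induction hypothesis to explicitly include the cone-containment so that this geometric bookkeeping is available at each step.
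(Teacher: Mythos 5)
First, note that the paper does not prove this lemma at all: it is imported verbatim from Angelini et al.~\cite{acbfp-mdg-12}, so the comparison below is against the argument in that source (which is essentially the one you are reconstructing).

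Your planarity half is sound and is the standard argument: by (S1) every edge of $T[v_i]$ \emph{and} the edge $(u,v_i)$ have slopes in $(\alpha_1(v_i),\alpha_2(v_i))$, so the convex cone $C_{v_i}$ of these directions apexed at $u$ contains $v_i$ and hence (cones being closed under translation by their own elements) all of $T[v_i]$; by (S3) these cones at $u$ are pairwise disjoint. You correctly flag that this translation step is the point that needs to be written out, and your sketch of how to do it is right.

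The monotonicity half has a genuine gap in the lowest-common-ancestor case. Your resolution of the ``subtle point'' is that (S2) puts every edge of the path inside the single slope interval $(\alpha_1(z),\alpha_2(z))$ of the LCA $z$, after which ``the same projection argument applies.'' This is not sufficient: slopes are directions of \emph{undirected} lines, but monotonicity is about \emph{traversal} directions, and the path reverses from climbing the tree to descending it at $z$. If $u$ and $v$ are the children of $z$ through which the path passes, the traversed directions lie in $(\alpha_1(u)+\pi,\alpha_2(u)+\pi)\cup(\alpha_1(v),\alpha_2(v))$ --- two arcs on nearly opposite sides of the circle --- and the fact that both undirected intervals sit inside $z$'s interval tells you nothing about whether this union fits in an open half-plane. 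Concretely, take $\alpha_1(z)=0.1$, $\alpha_2(z)=3.0$, $u$'s interval $(0.2,0.3)$ and $v$'s interval $(0.4,0.5)$: the only admissible projection directions are $\gamma+\pi/2$ with $\gamma\in(0.3,0.4)$, whereas the perpendicular to the bisector of $z$'s interval is about $3.12$ and fails (and the bisector itself fails too). The ingredient you never actually use in this case is (S3): it is the \emph{disjointness} $\alpha_2(u)<\alpha_1(v)$ of the two children's intervals that guarantees the two arcs of traversal directions leave a gap of length greater than $\pi$, so that projecting onto the direction $\gamma+\pi/2$ for any $\gamma$ strictly between $\alpha_2(u)$ and $\alpha_1(v)$ makes every edge of both halves advance strictly. (Relatedly, in your ancestor case the line $\ell$ should point \emph{along} the bisector of the cone, not perpendicular to it; an edge parallel to the bisector has zero projection on the perpendicular.) With the projection direction chosen from the gap between the two sibling intervals rather than from $z$'s interval, the argument closes and coincides with the proof in~\cite{acbfp-mdg-12}.
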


We will now create a special slope-disjoint drawing for rooted
orderly trees.

\wormhole{goodtreeDrawing}
\newcommand{\goodtreeDrawingText}{Let $T=(V,E)$ be an ordered tree rooted at a vertex~$r$ with~$\lambda$ leaves.
  Then, $T$ admits a slope-disjoint drawing with~$\lambda$ segments
  on an $O(n)\times O(n^2)$ grid such that all slopes are integer.
  Such a drawing can be found in $O(n)$ time.}
\begin{lemma}\label{lem:goodtree-drawing}
  \goodtreeDrawingText
\end{lemma}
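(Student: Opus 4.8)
The plan is to first reduce the statement to a combinatorial target and then realize it geometrically. Root $T$ at $r$, orient every edge towards $r$, and let $c_v$ denote the number of children of $v$. I would prescribe that at every non-root internal vertex $v$ exactly one child-edge is drawn \emph{collinear} with the parent edge $(p(v),v)$, while every other child-edge starts a fresh maximal collinear chain. Each such chain is then a monotone path descending from some vertex to a leaf, so it has a unique topmost edge and a leaf as its bottom endpoint. The topmost edges are precisely the $c_r$ child-edges of $r$ together with $c_v-1$ child-edges at each non-root internal vertex $v$; summing and using $\sum_v c_v=n-1$ over all vertices shows that the number of chains, hence of segments, equals $n$ minus the number of internal vertices, that is, $\lambda$. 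It therefore suffices to build a slope-disjoint drawing that realizes these collinearities with integer slopes on an $O(n)\times O(n^2)$ grid.

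For the slopes I would assign to every vertex $u$ a nested open angular interval $(\alpha_1(u),\alpha_2(u))$ in one top-down traversal. Starting from a master interval at $r$, the children $u_1,\dots,u_c$ of a vertex $v$, taken in the left-to-right order of the ordered tree, receive pairwise disjoint subintervals of $(\alpha_1(v),\alpha_2(v))$ in this order, which is exactly what (S2) and (S3) demand. Inside $(\alpha_1(v),\alpha_2(v))$ I reserve one interior value as the common slope $\sigma(v)$ of the continuing segment and give the designated continuing child $u_j$ an interval straddling $\sigma(v)$, so that $\sigma(u_j)=\sigma(v)$; this forces the required collinearity and, together with the nesting, yields (S1). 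Since the segments partition the edges and each carries a single slope, at most $\lambda$ distinct slopes occur. Taking the angles to be $\arctan$ of integers from a range of size $O(\lambda)=O(n)$ keeps all slopes integer and inside $(0,\pi/2)\subset(0,\pi)$, while respecting the prescribed order of the intervals.

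It remains to place the vertices on integer coordinates. I would put $r$ at the top and descend, maintaining the invariant that $T[u]$ lies in the cone with apex $u$ bounded by the lines of slopes $\alpha_1(u)$ and $\alpha_2(u)$, and that this cone fits in a box of width $O(\text{number of leaves of }T[u])$ and height $O(n_u\cdot s_{\max})$, where $s_{\max}=O(n)$ is the largest slope. The designated child is placed straight below $v$ on the line of slope $\sigma(v)$, and the remaining children are placed with unit horizontal offsets into their disjoint subcones, using integer edge lengths that realize the prescribed integer slope of each edge. Disjointness of the subcones gives planarity and the geometric content of (S1)--(S3), so Lemma~\ref{lem:slope-disjoint} provides a planar, monotone drawing. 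Summing the width invariant over the leaves bounds the width by $O(n)$, and the height telescopes to $O(n^2)$ since there are at most $n$ levels of slope at most $O(n)$. As every vertex is handled a constant number of times and the intervals can be pre-assigned by a single DFS with counting-sort bucketing, the running time is $O(n)$.

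The step I expect to be the main obstacle is the last one: realizing the \emph{exact} integer slopes with integer edge lengths while (i) keeping $\sigma(v)$ strictly interior so that the designated child-edge can be collinear with the parent edge yet its siblings remain in disjoint subcones, and (ii) still packing each subtree into width $O(\text{\#leaves})$ and height $O(n^2)$. I would resolve this by induction on the height of the subtree, carrying the box dimensions and the apex slope as the inductive data, charging each unit of width to a distinct leaf and each unit of height to a level count (at most $n$) times the maximal slope (at most $O(n)$). The collinearity requirement pins down only one child per vertex, which leaves enough freedom in the remaining edge lengths to meet both the integrality of the slopes and the disjointness of the cones.
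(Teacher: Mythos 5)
Your proposal is correct and follows essentially the same route as the paper: one designated child per internal vertex continues the parent's segment (giving exactly $\lambda$ maximal collinear chains, each ending at a distinct leaf), nested integer-slope intervals yield slope-disjointness and hence planarity and monotonicity via Lemma~\ref{lem:slope-disjoint}, and unit horizontal offsets per edge make the integer slopes trivially realizable on an $O(n)\times O(n^2)$ grid, which dissolves the ``main obstacle'' you flag at the end. The one inaccuracy is your width invariant of $O(\text{number of leaves of }T[u])$ per subtree, which the construction does not satisfy (a path has one leaf but depth-many columns); the correct and sufficient observation is that with unit horizontal steps the width equals the maximum depth, hence at most $n$.
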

\begin{sketchWithFormula}
  Let $v_1,\ldots,v_n{=}r$ be the vertices of $T$ as encountered in a counter-clockwise
  post-order traversal. Let $e_i=(v_i,p(v_i)),1\le i<n$.
  We assign the slopes to the edges of $T$ in the
  order $e_1,\ldots,e_{n-1}$.
  We start with assigning slope~$s_1=1$ to~$e_1$.
  For any other edge~$e_i,1<i<n$, if~$v_i$ is a leaf in~$T$, then we assign the slope
  $s_i=s_{i-1}+1$ to~$e_i$. Otherwise, since we traverse
  the vertices in a post-order, $p(v_{i-1})=v_i$ 
  and we assign the slope $s_i=s_{i-1}$ to $e_i$.
  
  We create a drawing~$\Gamma$ of~$T$ as follows. We place $r=v_n$
  at coordinate $(0,0)$. For every other vertex~$v$ with parent~$p$ that is
  drawn at coordinate $(x,y)$, we place~$v$ at coordinate $(x+1,y+\slope(v))$.
  
  We now analyze the number of segments used in~$\Gamma$;
  slope-disjointness, area, and running time are proven in 
  \ifproc
  the full version~\cite{kmss-dpgfs-arxiv19}.
  \else
  Appendix~\ref{app:triconnected}.
  \fi
  The root~$r$ is an endpoint of $\deg(r)$ segments and every leaf is an
  endpoint of exactly~1 segment. For every other vertex~$v$, its incoming edge
  and one of its outgoing edges lie on the same segment, so it is an endpoint
  of $\deg(v)-2$ segments. Since every segment has two endpoints, the total number
  of segments is 
  \begin{flalign*}
  &\frac12\left(\deg(r)+\sum_{v\text{ not leaf},v\neq r}(\deg(v)-2)+\sum_{v\text{ leaf}}\deg(v)\right)\\
  =&\frac12\left(\sum_{v}\deg(v)-2(n-\lambda-1)\right)=\frac12\left(2n-2-2n+2\lambda+2\right)=\lambda.\tag*{\qed}
  \end{flalign*}
\end{sketchWithFormula}

\begin{lemma}\label{lem:goodtree-to-3con}
  Let $G=(V,E)$ be a planar graph and
  let $T$ be an orderly spanning tree of~$G$ with~$\lambda$ leaves.
  Then, $G$ admits a planar monotone drawing with at most~$m-n+1+\lambda$ segments
  on an $O(n)\times O(n^2)$ grid in $O(n)$ time.
\end{lemma}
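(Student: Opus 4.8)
The plan is to take the slope-disjoint drawing of $T$ produced by Lemma~\ref{lem:goodtree-drawing} and insert the non-tree edges of $G$ as straight-line segments. By Lemma~\ref{lem:goodtree-drawing}, $T$ admits a slope-disjoint drawing $\Gamma$ with $\lambda$ segments on an $O(n)\times O(n^2)$ grid with integer slopes, and by Lemma~\ref{lem:slope-disjoint} this drawing is planar and monotone. Since $T$ is a spanning tree, $G$ has exactly $m-(n-1)=m-n+1$ non-tree edges. I would draw each of them as a single straight segment; each such segment increases the segment count by at most one, so the resulting drawing uses at most $\lambda+m-n+1$ segments, matching the claimed bound. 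The grid does not grow, as all endpoints are existing grid vertices, and the whole construction runs in $O(n)$ time because a planar graph has $O(n)$ edges and each is placed in constant time.

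The main obstacle is to show that the straight non-tree edges can be added without creating crossings, i.e.\ that the result is still planar. Here I would exploit the interplay between the orderly property of $T$ and slope-disjointness. Because $T$ is orderly, every non-tree edge of $G$ joins two \emph{unrelated} vertices: for $i<j$ the edge $\{v_i,v_j\}$ satisfies $v_j\in N^+(v_i)$ and $v_i\in N^-(v_j)$, and the neighbors around each $v_i$ appear counter-clockwise as $p(v_i),N^-(v_i),C(v_i),N^+(v_i)$ (Fig.~\ref{fig:orderly}). Slope-disjointness, in turn, nests the slope interval $(\alpha_1(u),\alpha_2(u))$ of each subtree $T[u]$ inside that of its parent (S2) and keeps the intervals of siblings disjoint (S3), so each subtree is confined to its own angular cone in $\Gamma$ (Fig.~\ref{fig:slopedisjoint}). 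I would argue that the counter-clockwise order of edges around every vertex in $\Gamma$ coincides with the orderly rotation system of $G$, so that drawing each non-tree edge straight realizes the given planar embedding; the cone confinement then guarantees that such a segment stays in the region between its endpoints and crosses neither a tree edge nor another non-tree edge. A secondary point to verify within this argument is that no segment passes through a third vertex, which I expect to follow from the same angular separation together with the integrality of the slopes. This planarity argument is the technical heart of the lemma.

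Finally, monotonicity comes essentially for free. By Lemma~\ref{lem:slope-disjoint} the drawing of $T$ is monotone, so the unique tree path between any two vertices is drawn monotonically; since $T$ spans $G$, every pair of vertices is already connected by a monotone path in $\Gamma$, and adding edges can only create further monotone paths. Hence the straight-line drawing of $G$ is planar and monotone, uses at most $m-n+1+\lambda$ segments, lies on an $O(n)\times O(n^2)$ grid, and is computed in $O(n)$ time, as required.
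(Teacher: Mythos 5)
Your segment count, monotonicity argument, and running-time analysis are fine, but there is a genuine gap at what you yourself identify as ``the technical heart of the lemma'': you never actually establish that the non-tree edges can be inserted as crossing-free straight segments into the drawing of Lemma~\ref{lem:goodtree-drawing}, and in fact they cannot be in general. The vertex positions in that drawing are determined purely by the tree structure (each vertex one unit to the right of its parent, displaced by an integer slope), with no reference to the non-tree edges of~$G$. The orderly property and the nested slope cones guarantee that the \emph{tree} is drawn planar and that subtrees occupy disjoint angular wedges, but they do not control where a straight chord between two unrelated vertices passes: such a chord can cut through the wedge of an intermediate sibling subtree, cross tree edges near the common ancestor of its endpoints, or cross another chord. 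The correct rotation system at each vertex is a necessary condition for planarity of a straight-line drawing, not a sufficient one, so your appeal to the counter-clockwise order $p(v_i),N^-(v_i),C(v_i),N^+(v_i)$ does not close the argument.

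The paper avoids this by not inserting the chords into the raw tree drawing at all: it feeds the slope-disjoint drawing into the algorithm of Hossain and Rahman~\cite{hr-gstgd-tcs15}, which \emph{stretches} the tree edges (lengthening them while preserving their slopes, so the segment count of~$T$ is unchanged) until enough room has been created for all remaining edges of~$G$ to be added without crossings. This stretching is the missing idea in your proof; it is also why your claim that ``the grid does not grow'' is off --- the width does grow, by a constant factor, which is exactly what the $O(n)\times O(n^2)$ bound in the statement accounts for. To repair your argument you would either have to import that stretching step (at which point you have the paper's proof) or supply a genuinely new planarity argument for the unstretched drawing, which I do not believe exists.
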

\begin{proof}
  We first create a drawing of~$T$ according to Lemma~\ref{lem:goodtree-drawing}.
  Now, we will plug this tree drawing into the algorithm by Hossain and 
  Rahman~\cite{hr-gstgd-tcs15}.
  
  This algorithm takes a slope-disjoint drawing of an
  orderly spanning tree~$T$ of $G$ and stretches the edges of~$T$ such that the
  remaining edges of~$G$ can be inserted without crossings. In this stretching
  operation, the slopes of the edges of $T$ are not changed. Further,
  the total width of the drawing only increases by a constant factor.
  Since~$T$ is drawn slope-disjoint, this produces a planar monotone drawing
  of~$G$ on an $O(n)\times O(n^2)$ grid. The algorithm runs in $O(n)$ time.
  
  To count the number of segments, assume that every edge of~$G$ that does
  not lie on~$T$ is drawn with its own segment. We have drawn~$T$ with~$\lambda$
  segments and the slopes of the edges of~$T$. Hence, our algorithm draws~$G$
  with~$\lambda$ segments for~$T$ and with $m-n+1$ segments for the
  remaining edges.
\end{proof}

Both Chiang et al.~\cite{cll-ostwa-sicomp05} and Hossain and 
Rahman~\cite{hr-gstgd-tcs15} have shown that every planar graph
has an embedding that admits an orderly spanning tree. However, we do not know 
anything about the number of leaves in an orderly spanning tree. 
Miura et al.~\cite{man-cdrsl-ijfcs05} have shown that Schnyder trees are 
orderly spanning trees, and it is known that every 3-connected planar graph
has a Schnyder realizer.

\begin{lemma}[\cite{man-cdrsl-ijfcs05}]\label{lem:schnyder-good}
  Let $G=(V,E)$ be a 3-connected planar graph and let $T_1$, $T_2$, and $T_3$
  be the Schnyder trees of a Schnyder realizer of~$G$. Then, $T_1$, $T_2$,
  and $T_3$ are orderly spanning trees of~$G$.
\end{lemma}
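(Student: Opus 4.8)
The plan is to prove the statement for $T_1$ rooted at $v_1$; the cases of $T_2$ and $T_3$ then follow by the cyclic symmetry of the Schnyder realizer. Recall that every interior vertex has exactly one outgoing $1$-edge, so following these edges towards $v_1$ traces a path that, by the acyclicity of the realizer, never revisits a vertex; hence $T_1$ is a spanning tree, which we root at the outer vertex $v_1$, inheriting the cyclic edge order from the embedding. It then remains to verify the two defining properties of an orderly tree at every vertex $u$: (i) every neighbor joined to $u$ by a non-tree edge is unrelated to $u$ in $T_1$ (so there is no edge to a non-parent ancestor or to a non-child descendant), and (ii) the counterclockwise order of the neighbors of $u$ is $p(u), N^-(u), C(u), N^+(u)$.

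For the local part of (ii) I would read the picture directly off the defining cyclic condition of the realizer. At $u$ the counterclockwise order of incident edges is: outgoing $1$-edge, incoming $3$-edges, outgoing $2$-edge, incoming $1$-edges, outgoing $3$-edge, incoming $2$-edges. The outgoing $1$-edge is precisely $p(u)$ and the incoming $1$-edges are precisely the children $C(u)$, already sitting as a contiguous block in the correct position. Thus it suffices to show that the two remaining wings are the unrelated neighbors, split correctly: that the incoming $3$-edges and the outgoing $2$-edge lead to vertices of smaller pre-order index (forming $N^-(u)$), while the outgoing $3$-edge and the incoming $2$-edges lead to vertices of larger pre-order index (forming $N^+(u)$).

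To establish this together with (i), I would use the region decomposition induced by the realizer: the three monochromatic paths $P_1(u), P_2(u), P_3(u)$ from $u$ to the roots are internally vertex-disjoint and split $G$ into three regions, with the subtree $T_1[u]$ occupying exactly the region delimited by the $2$- and $3$-paths of $u$ and not containing $v_1$. From this I would deduce that a $2$- or $3$-edge at $u$ can never reach a vertex of $u$'s $1$-path (an ancestor) nor re-enter $T_1[u]$ except through a child, which yields (i) via the acyclicity of the combined orientation; and that whether a neighbor lies on the $P_2(u)$-side or the $P_3(u)$-side is precisely what separates the lower-index from the higher-index neighbors in the counterclockwise pre-order, which yields the $N^-/N^+$ split of (ii).

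The main obstacle will be the bookkeeping forced by the bidirected edges of the $3$-connected (rather than triangulated) case: a single edge may be an outgoing $2$-edge at $u$ and an outgoing $3$-edge at its other endpoint, so its two endpoints classify it into different wings, and one must check that the orderly condition holds consistently at both ends. I would dispose of this by first verifying—again from the path/region structure—that a bidirected edge never joins two related vertices, and then confirming that, whatever its two colours, such an edge always falls into the outer wings at each endpoint and is therefore an unrelated neighbor there, so that the cyclic-order requirement (ii) remains satisfied simultaneously at both of its ends.
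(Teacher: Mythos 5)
This lemma is not proved in the paper at all: it is imported verbatim from Miura et al.~\cite{man-cdrsl-ijfcs05}, so there is no in-paper argument to compare your attempt against. Judged on its own, your outline follows the standard route for this result (match the Schnyder cyclic condition at each vertex against the orderly condition, then use the three monochromatic paths $P_1(u),P_2(u),P_3(u)$ and the region decomposition they induce), and the matching of wedges you propose --- parent $=$ outgoing $1$-edge, children $=$ incoming $1$-edges, $N^-$ $=$ incoming $3$-edges plus the outgoing $2$-edge, $N^+$ $=$ the outgoing $3$-edge plus incoming $2$-edges --- is the correct target.

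However, as written the attempt is a plan rather than a proof, and the steps it defers are exactly the content of the lemma. The two central claims --- (a) every neighbor of $u$ reached by a non-tree edge is unrelated to $u$ in $T_1$, and (b) the in-$3$/out-$2$ wedge consists precisely of the smaller-index unrelated neighbors while the out-$3$/in-$2$ wedge consists of the larger-index ones --- are only asserted (``I would deduce''). Establishing (a) requires showing that no $2$- or $3$-edge at $u$ can land on $P_1(u)$ or re-enter $T_1[u]$, and the standard way to do this is via the acyclicity of the mixed-orientation unions $T_i\cup T_{i+1}^{-1}\cup T_{i-1}^{-1}$ (or an equivalent counting argument); nothing of that sort is invoked. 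Moreover, the facts you lean on --- that the three paths $P_i(u)$ are internally disjoint in the $3$-connected (bidirected) setting, and that $T_1[u]$ occupies ``exactly'' the region bounded by $P_2(u)\cup P_3(u)$ --- are themselves consequences of that same machinery, not free inputs (and ``exactly'' is an overstatement: the descendants are \emph{contained} in that region, whose boundary vertices other than $u$ are not descendants). Finally, the bidirected-edge bookkeeping, which you correctly identify as the main new difficulty relative to the triangulated case, is acknowledged but not actually carried out. To make this a proof you would either have to supply these arguments or, as the paper does, cite~\cite{man-cdrsl-ijfcs05} for the statement.
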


Bonichon et al.~\cite{b-wtr-icalp02} showed that there
is a Schnyder realizer for every triangulated graph such that the total number
of leaves in $T_1$, $T_2$, and $T_3$ is at most $2n+1$, which already gives us
a good bound on the number of segments for triangulations. We will now show
that the same holds for every Schnyder realizer of a 3-connected graph. 
Let~$v$ be a leaf in one of the Schnyder trees~$T_k$, $k\in\{1,2,3\}$, 
that is not the root of a Schnyder tree,
so~$v$ has no incoming $k$-edge.
Hence, the outgoing $(k+1)$-edge $(v,u)$ and the outgoing $(k-1)$-edge $(v,w)$ are
consecutive in the cyclical ordering around~$v$, so they lie on a common face~$f$.
We assign the pair $(v,k)$ to~$f$. 
We first show two lemmas.

\begin{lemma}\label{lem:schnyder-consecutive}
  Let~$u_1,\ldots,u_p$ be the vertices on an interior face~$f$ in ccw order.
  If $(u_1,k)$ and~$(u_2,i)$ are assigned to~$f$ for some $i,k\in\{1,2,3\}$, then $i=k$.
\end{lemma}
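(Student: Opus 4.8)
The plan is to convert each of the two assignments into a statement about the colours and orientations of the two edges of $f$ at $u_1$ and at $u_2$, observe that these two descriptions share the single edge $e=u_1u_2$, and then read off $i=k$ from the fact that $e$ must be a bidirected edge with tightly constrained colours.

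First I would record what $(u_1,k)$ assigned to $f$ means locally. By definition $u_1$ is a non-root leaf of $T_k$, hence an interior vertex, so it carries the cyclic pattern out-$1$, in-$3$, out-$2$, in-$1$, out-$3$, in-$2$ with the incoming $k$-edges absent. Deleting the empty in-$k$ group makes the outgoing $(k{+}1)$-edge and the outgoing $(k{-}1)$-edge consecutive, and $f$ is the wedge between them. Traversing $f$ counter-clockwise (interior on the left), the edge leaving $u_1$ toward $u_2$ is the one on whose counter-clockwise side $f$ lies; a short angle computation identifies this as the outgoing $(k{+}1)$-edge, so $(u_1,u_2)$ is the out-$(k{+}1)$-edge at $u_1$ and $(u_1,u_p)$ is the out-$(k{-}1)$-edge. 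Applying the same reasoning at $u_2$, which is a non-root leaf of $T_i$, the edge $(u_2,u_1)$ is the out-$(i{-}1)$-edge at $u_2$. Hence the single edge $e=u_1u_2$ is outgoing at both endpoints, i.e.\ bidirected, with outgoing colour $k{+}1$ at $u_1$ and outgoing colour $i{-}1$ at $u_2$.

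Next I would exploit the one structural fact about bidirected edges that follows directly from the cyclic pattern: at either endpoint a bidirected edge occupies a single angular slot that is simultaneously its outgoing slot and a slot inside an adjacent incoming group, so its outgoing and incoming colours there must be cyclically consecutive. Concretely, at $u_1$ the edge $e$ is out-$(k{+}1)$ and, since it is out-$(i{-}1)$ at $u_2$, it is in-$(i{-}1)$ at $u_1$; consecutiveness forces $i{-}1\in\{k,k{+}2\}$. The case $i{-}1=k$ would make $e$ an incoming $k$-edge at $u_1$, giving $u_1$ a child in $T_k$ and contradicting that $u_1$ is a leaf of $T_k$. Therefore $i{-}1=k{+}2\equiv k{-}1\pmod 3$, which yields $i=k$, as claimed.

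The main obstacle I expect is bookkeeping rather than a deep idea: I must fix one convention for "counter-clockwise order of the vertices of $f$" and carry it consistently through both the face traversal and the vertex pattern, since the entire argument is modular arithmetic on the colours and an off-by-one in the convention would shift every index. Reassuringly, rerunning the computation under the opposite convention (so that $(u_1,u_2)$ becomes the out-$(k{-}1)$-edge) again excludes only the incoming-$k$ case and again delivers $i=k$, so the conclusion is insensitive to this choice. The only non-elementary ingredient is the consecutive-colour property of bidirected edges, which I would state and justify in one line from the displayed cyclic ordering; everything else is the local translation and a single modular step.
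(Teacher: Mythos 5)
Your proof is correct and follows essentially the same route as the paper's: both arguments analyse the shared edge $u_1u_2$ locally, using the cyclic ordering at a leaf to identify it as the outgoing $(k{+}1)$-edge at $u_1$ and an outgoing edge at $u_2$, and then exclude the forbidden colours via the leaf condition at $u_1$ and the fact that a bidirected edge lies in two distinct Schnyder trees. The only difference is organizational — you read off the colour at $u_2$ directly from the assignment $(u_2,i)$ and finish with modular arithmetic, whereas the paper reaches the same colour by case elimination — and your ``consecutiveness'' constraint is, with three colours, just a restatement of the bidirectedness condition, so nothing new is needed there.
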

\begin{proof}
  Refer to Fig.~\ref{fig:schnyder-consecutive}. 
  By definition, $(u_1,u_2)$ is an outgoing $(k+1)$-edge at~$u_1$.
  Since~$u_1$ is a leaf in~$T_k$, $(u_1,u_2)$ cannot
  be an outgoing $k$-edge at~$u_2$. Hence, $(u_1,u_2)$ is either an incoming $(k+1)$-edge at~$u_2$
  (if it is unidirected), or an outgoing $(k-1)$-edge at~$u_2$ (if it is
  bidirected); it cannot be an outgoing $(k+1)$-edge since bidirected edges have to
  belong to two different Schnyder trees.  For $(u_2,i)$ to be assigned to~$f$,
  $u_2$ must have two outgoing edges at~$f$, so we are in the latter case.
  Hence,  $(u_2,u_3)$ is outgoing at~$u_2$, and
  by the cyclical ordering of the edges around~$u_2$, it is an outgoing
  $(k+1)$-edge. Thus,~$u_2$ has an outgoing $(k+1)$-edge and an outgoing $(k-1)$-edge
  at~$f$, so $i=k$.
\end{proof}

\begin{figure}[t]
  \centering
    \includegraphics[page=1]{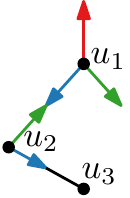}
    \quad\vline\quad
    \includegraphics[page=4]{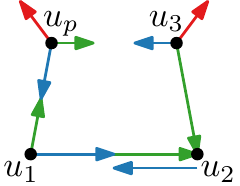}
    \includegraphics[page=1]{schnyder-face}
    \includegraphics[page=2]{schnyder-face}
    \includegraphics[page=3]{schnyder-face}
  \caption{(Left) Proof of Lemma~\ref{lem:schnyder-consecutive} and (right) 
    proof of Lemma~\ref{lem:schnyder-between}.}
    \label{fig:schnyder-between}
    \label{fig:schnyder-consecutive}
\end{figure}

\wormhole{schnyderBetween}
\newcommand{\schnyderBetweenText}{%
  Let $u_1,u_2,\ldots,u_p$ be vertices on an interior face~$f$ in counter-clockwise order.  
  If $u_3,\ldots,u_p$ are assigned to~$f$, then neither~$u_1$ nor~$u_2$ are.
  }
\begin{lemma}\label{lem:schnyder-between}
  \schnyderBetweenText
\end{lemma}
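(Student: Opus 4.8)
The plan is to show that if either $u_1$ or $u_2$ were also assigned to $f$, then \emph{all} of $u_1,\dots,u_p$ would be assigned, which is impossible. First I would apply Lemma~\ref{lem:schnyder-consecutive} to the ccw-consecutive pairs $(u_3,u_4),(u_4,u_5),\dots,(u_{p-1},u_p)$ to conclude that $u_3,\dots,u_p$ are all assigned with one common color $k$; concretely, for each such $u_j$ the edge to its ccw-successor $u_{j+1}$ is an outgoing $(k+1)$-edge at $u_j$, and the edge to its ccw-predecessor $u_{j-1}$ is an outgoing $(k-1)$-edge at $u_j$ (vertex indices taken cyclically, $u_{p+1}=u_1$, and colors read modulo $3$).

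The heart of the argument is a local claim at the two undecided vertices. At an interior vertex $w$ of $f$, the two edges bounding $f$ are consecutive in the rotation at $w$, appearing in ccw order as $e_{\mathrm{succ}}$ (to the ccw-successor of $w$ on $f$) followed by $e_{\mathrm{pred}}$ (to the ccw-predecessor), with the wedge of $f$ between them. I would prove: if $e_{\mathrm{succ}}$ is an outgoing $(k-1)$-edge at its far endpoint and $e_{\mathrm{pred}}$ is an outgoing $(k+1)$-edge at its far endpoint, then $w$ itself must be a leaf of $T_k$ with $(w,k)$ assigned to $f$. The proof lists the possible types of these two edges \emph{at $w$}: an edge outgoing $(k-1)$ at its far end is, at $w$, either incoming $(k-1)$ or, if bidirected, outgoing $k$ or outgoing $(k+1)$; symmetrically $e_{\mathrm{pred}}$ is at $w$ incoming $(k+1)$, outgoing $k$, or outgoing $(k-1)$. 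Checking these candidate types against the fixed cyclic Schnyder pattern (outgoing~$1$, incoming~$3$, outgoing~$2$, incoming~$1$, outgoing~$3$, incoming~$2$) for two rotation-adjacent edges, the only surviving combination is $e_{\mathrm{succ}}$ outgoing $(k+1)$ and $e_{\mathrm{pred}}$ outgoing $(k-1)$ at $w$ with no incoming $k$-edge between them -- which is exactly the condition that $(w,k)$ is assigned to $f$.

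With this claim the argument closes quickly. Suppose $u_2$ is assigned; by Lemma~\ref{lem:schnyder-consecutive} its color is $k$, so $\{u_1,u_2\}$ is outgoing $(k-1)$ at $u_2$ while $\{u_1,u_p\}$ is outgoing $(k+1)$ at $u_p$, and the local claim at $w=u_1$ forces $(u_1,k)$ to be assigned as well. Symmetrically, if $u_1$ is assigned, the local claim at $w=u_2$ forces $u_2$ to be assigned. Hence in either case every $u_1,\dots,u_p$ is assigned with color $k$, so $(u_j,u_{j+1})$ is an outgoing $(k+1)$-edge at $u_j$ for all $j$, i.e.\ $u_{j+1}$ is the $(k+1)$-parent of $u_j$. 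Following parent pointers then yields the directed cycle $u_1\to u_2\to\cdots\to u_p\to u_1$ in the Schnyder tree $T_{k+1}$, contradicting that $T_{k+1}$ is a tree. Therefore neither $u_1$ nor $u_2$ is assigned to $f$.

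I expect the case enumeration in the local claim to be the main obstacle: one must treat unidirected and bidirected edges uniformly and, crucially, include the degenerate rotation-adjacencies in which a run of incoming edges of some color is empty (so that two outgoing edges become consecutive), since it is precisely one such degenerate adjacency that pins down the forced assignment of $w$. A minor additional point is that the three outer vertices are roots, hence never assigned, so the pattern enumeration is only ever applied at interior vertices; the remaining steps -- propagating colors via Lemma~\ref{lem:schnyder-consecutive} and invoking acyclicity of $T_{k+1}$ -- are routine.
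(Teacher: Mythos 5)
Your proof is correct and matches the paper's argument in all essentials: the paper likewise propagates the common colour $k$ via Lemma~\ref{lem:schnyder-consecutive}, performs the same enumeration of edge types at $u_1$ and $u_2$ against the cyclic Schnyder pattern (its Cases~1--3 contain exactly the rotation check underlying your local claim), and obtains the final contradiction from the directed cycle in $T_{k+1}$ (its Case~4). The only difference is organizational---you package the rotation check as a reusable propagation claim rather than a four-way case split---and both versions share the same small informality of invoking the interior-vertex rotation pattern at $u_1,u_2$ without separately dispatching the subcase where one of them is a Schnyder root.
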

\begin{sketch}
  From Lemma~\ref{lem:schnyder-consecutive}, it follows that
  $(u_3,k),\ldots,(u_p,k)$ are assigned to~$f$ for some~$k\in\{1,2,3\}$, so
  $(u_1,u_p)$ is an outgoing $(k+1)$-edge
  at~$u_p$ and $(u_2,u_3)$ is an outgoing $(k-1)$-edge at~$u_3$; since~$u_1$
  and~$u_p$ are leaves in~$T_k$,
  $(u_1,u_p)$ is either an incoming $(k+1)$-edge or an outgoing $(k-1)$-edge at~$u_1$
  and $(u_2,u_3)$ is either an incoming $(k-1)$-edge or an outgoing $(k+1)$-edge at~$u_2$.
  However, each of the four possible configurations violates the
  properties of a Schnyder realizer,
  as illustrated in Fig.~\ref{fig:schnyder-between}. 
  The full proof is given in 
  \ifproc
    the full version~\cite{kmss-dpgfs-arxiv19}.
  \else
    Appendix~\ref{app:triconnected}.
  \fi
\end{sketch}

Now we prove the bound on the number of leaves in a Schnyder realizer.

\begin{lemma}\label{lem:schnyder-leaves}
  Let~$T_1,T_2,T_3$ be a Schnyder realizer of a 3-connected planar graph~$G=(V,E)$.
  Then, there are at most $2n+1$ leaves in total in $T_1$, $T_2$, and~$T_3$.
\end{lemma}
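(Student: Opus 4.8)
The plan is to prove the bound by a charging argument built on the face-assignment of pairs $(v,k)$ introduced just before Lemma~\ref{lem:schnyder-consecutive}. Recall that every leaf $v$ of a tree $T_k$ that is not a root has both an outgoing $(k+1)$-edge and an outgoing $(k-1)$-edge that are consecutive around $v$, and hence receives a well-defined assignment $(v,k)$ to the common interior face they bound. On the other hand, each of the three roots $v_1,v_2,v_3$ is a leaf in at most the two trees other than its own and, by definition, receives no assignment. Therefore the total number of leaves in $T_1,T_2,T_3$ equals the number of assignments plus the number of leaf-appearances coming from the roots, the latter being at most $6$. It thus suffices to show that at most $2n-5$ assignments are made in total.

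The key step is to bound the number of assignments a single interior face can receive, and this is where the two preceding lemmas do all the work. I claim that an interior face $f$ with $p$ vertices receives at most $p-2$ assignments. Suppose instead that at least $p-1$ of its vertices were assigned to $f$. Writing the vertices of $f$ in counter-clockwise order, and using that any cyclic shift of this labeling is again a valid counter-clockwise labeling, I can choose the labeling $u_1,\dots,u_p$ so that the (at most one) unassigned vertex is $u_1$. Then $u_3,\dots,u_p$ are all assigned to $f$, so Lemma~\ref{lem:schnyder-between} forces both $u_1$ and $u_2$ to be unassigned; this contradicts that $u_2$ is assigned (and the same contradiction arises for any labeling when all $p$ vertices are assigned). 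Hence every interior face receives at most $|f|-2$ assignments, where Lemma~\ref{lem:schnyder-consecutive} guarantees the single-color consistency underlying Lemma~\ref{lem:schnyder-between}.

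It then remains to sum the per-face bounds. Since each assignment lands on a face incident to a non-root vertex, no assignment can land on the outer face, whose only incident vertices are the three roots; thus the number of assignments is at most $\sum_f(|f|-2)$ taken over interior faces $f$. Using $\sum_{\text{all }f}|f|=2m$, that the outer face has size $3$, and Euler's formula $F=m-n+2$ (so there are $m-n+1$ interior faces), this sum equals $(2m-3)-2(m-n+1)=2n-5$. Adding the at most $6$ uncharged leaves from the three roots yields at most $2n-5+6=2n+1$ leaves in total, as claimed. I expect the main subtlety to be not the per-face inequality itself but pinning down the additive constant precisely enough to reach exactly $2n+1$ rather than a weaker $2n+O(1)$: this requires the two observations that no assignment falls on the outer face and that the three roots contribute at most $6$ uncharged leaves.
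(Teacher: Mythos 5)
Your overall strategy is the paper's: the same face-assignment of pairs $(v,k)$, the same per-face bound $\deg(f)-2$ derived from Lemma~\ref{lem:schnyder-between} (your cyclic-shift phrasing of that deduction is fine), and the same Euler-formula summation. The gap is in your treatment of the outer face. You assert that no assignment can land on the outer face because ``its only incident vertices are the three roots.'' That holds for triangulations, but the lemma concerns general 3-connected planar graphs, where the outer face need not be a triangle: the three roots are three chosen suspension vertices on the outer cycle (think of the cube graph, whose outer face is a quadrilateral), and any further outer vertices are non-roots. Such a vertex $v$ on the outer boundary typically has its outgoing $(k-1)$- and $(k+1)$-edges along the outer cycle, so if it has no incoming $k$-edge, the pair $(v,k)$ is assigned precisely to the outer face. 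Assignments to the outer face therefore do occur, and your count of at most $2n-5$ assignments is unjustified as written.

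The damage is contained. Each non-root outer vertex is assigned to the outer face at most once (for the same reason no vertex is assigned twice to any face: the three candidate consecutive pairs of outgoing edges span distinct angular sectors, hence distinct faces), so the outer face receives at most $\deg(f^*)-3$ assignments. Adding this to your interior sum $\sum_{\text{interior }f}(\deg(f)-2)=2n-2-\deg(f^*)$, the $\deg(f^*)$ terms cancel and you recover $2n-5$ assignments, hence $2n+1$ leaves after adding the at most $6$ root leaf-occurrences. The paper's bookkeeping differs only cosmetically: it charges the roots' leaf-occurrences to the outer face and bounds that face by $\deg(f^*)+3=(\deg(f^*)-2)+5$, so that $\sum_{f\in F}(\deg(f)-2)+5$ runs over all faces. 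You should replace the false claim about the outer face with the $\deg(f^*)-3$ bound; everything else stands.
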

\begin{proofWithFormula}
  Consider any interior face~$f$ of~$G$. By definition of the assignment, no vertex
  can be assigned to~$f$ twice.
  By Lemma~\ref{lem:schnyder-between}, at least two vertices
  on~$f$ are not assigned to~$f$, so we assign at most $\deg(f)-2$
  leaves to~$f$. At the outer face~$f^*$, every vertex that is not the root
  of a Schnyder tree can be assigned as a leaf at most once. However, the
  root of each of the Schnyder trees has no outgoing edges, but it can be a
  leaf in both the other two Schnyder trees.
  Hence, we assign at most $\deg(f^*)+3$ leaves to   the outer face.
  Let~$F$ be the faces in~$G$.
  Since, for every Schnyder tree, each of its leaves gets assigned to exactly one face,
  the total number of leaves in $T_1$, $T_2$, and~$T_3$ is at most
  \[\sum_{f\in F}\left(\deg(f)-2\right)+5= 2m-2|F|+5=2m+2n-2m-4+5=2n+1.\tag*{\qed}\]
\end{proofWithFormula}

Now we have the tools to prove the main result  of this section.

\begin{theorem}\label{thm:3con}
  Any 3-connected planar graph can be drawn planar monotone
  on an $O(n)\times O(n^2)$ grid with $m-(n-4)/3$ segments
  in~$O(n)$ time.
\end{theorem}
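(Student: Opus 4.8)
The plan is to assemble the pieces established above into a short combinatorial argument. Given a 3-connected planar graph $G$, I would first compute a Schnyder realizer with Schnyder trees $T_1,T_2,T_3$; such a realizer exists for every 3-connected planar graph and can be found in $O(n)$ time.

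Next I would invoke the global leaf bound. By Lemma~\ref{lem:schnyder-leaves}, the three trees together contain at most $2n+1$ leaves. Hence, by averaging, at least one of them---say $T_k$---has at most $\lambda\le(2n+1)/3$ leaves. This is the crucial step that converts the bound on all three trees combined into a usable bound on a single tree.

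By Lemma~\ref{lem:schnyder-good}, $T_k$ is an orderly spanning tree of~$G$, so I would apply Lemma~\ref{lem:goodtree-to-3con} to~$T_k$. This yields a planar monotone drawing of~$G$ on an $O(n)\times O(n^2)$ grid, computed in $O(n)$ time, using at most $m-n+1+\lambda$ segments. Substituting $\lambda\le(2n+1)/3$ gives
\[ m-n+1+\frac{2n+1}{3}=m-\frac{n-4}{3}, \]
as claimed.

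The genuine difficulty here has already been absorbed into the two supporting results---the $2n+1$ leaf bound (Lemma~\ref{lem:schnyder-leaves}) and the tree-to-graph drawing procedure (Lemma~\ref{lem:goodtree-to-3con}). Given those, the theorem is essentially a corollary, and the only thing to watch is that the averaging together with the final arithmetic produces exactly the advertised constant; using the (possibly non-integer) bound $(2n+1)/3$ rather than its floor is what makes the expression collapse cleanly to $m-(n-4)/3$.
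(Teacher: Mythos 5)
Your proof is correct and follows exactly the same route as the paper: compute a Schnyder realizer, use the $2n+1$ total leaf bound (Lemma~\ref{lem:schnyder-leaves}) to pick a Schnyder tree with at most $(2n+1)/3$ leaves, note it is orderly (Lemma~\ref{lem:schnyder-good}), and feed it into Lemma~\ref{lem:goodtree-to-3con}. The arithmetic $m-n+1+(2n+1)/3=m-(n-4)/3$ checks out and matches the paper's $m-n/3+4/3$.
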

\begin{proof}
  Let $G=(V,E)$ be a 3-connected planar graph. We compute a Schnyder realizer
  of~$G$, which is possible in $O(n)$ time. 
  By Lemma~\ref{lem:schnyder-leaves}, the Schnyder trees have at most
  $2n+1$ leaves in total, so one of them, say~$T_1$, has at most $(2n+1)/3$
  leaves. By Lemma~\ref{lem:schnyder-good}, $T_1$ is an orderly spanning tree,
  so we can use Lemma~\ref{lem:goodtree-to-3con} to obtain a planar monotone
  drawing of~$G$ on an $O(n)\times O(n^2)$ grid with at most 
  $m-n+1+(2n+1)/3= m-n/3+4/3$ segments in $O(n)$ time.
\end{proof}

Since a planar graph has at most $m\le 3n-6$ edges, we have the following.

\begin{corollary}\label{cor:3con}
  Any 3-connected planar graph can be drawn planar monotone
  on an $O(n)\times O(n^2)$ grid with $(8n-14)/3$ segments
  in~$O(n)$ time.
\end{corollary}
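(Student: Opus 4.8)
The plan is to derive Corollary~\ref{cor:3con} directly from Theorem~\ref{thm:3con} by a one-line substitution of the planar edge bound. Theorem~\ref{thm:3con} already hands us a drawing with at most $m-(n-4)/3$ segments on the required $O(n)\times O(n^2)$ grid, in $O(n)$ time, and it is planar and monotone; the corollary merely restates this bound in a form that depends only on $n$. Since $G$ is planar, it has $m\le 3n-6$ edges. The number of segments $m-(n-4)/3$ is increasing in $m$, so the worst case is $m=3n-6$, and everything else about the statement (grid size, running time, planarity, monotonicity) carries over verbatim.

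The single computation to carry out is substituting $m=3n-6$ into the segment count:
\begin{equation*}
m-\frac{n-4}{3}\le (3n-6)-\frac{n-4}{3}=\frac{9n-18-n+4}{3}=\frac{8n-14}{3}.
\end{equation*}
This yields the claimed $(8n-14)/3$ bound. I would then note explicitly that the grid dimensions, the $O(n)$ running time, and the planar monotone property are exactly those guaranteed by Theorem~\ref{thm:3con} and are unaffected by bounding $m$, so no further work is needed there.

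There is no real obstacle here: all the difficulty has already been absorbed into the chain Lemma~\ref{lem:schnyder-leaves} $\to$ Lemma~\ref{lem:schnyder-good} $\to$ Lemma~\ref{lem:goodtree-to-3con} $\to$ Theorem~\ref{thm:3con}, which produces the $m$-dependent bound. The corollary is purely a simplification that trades the sharper $m$-sensitive estimate for a clean function of $n$. The only thing to be careful about is the direction of the inequality—one must confirm that replacing $m$ by its upper bound $3n-6$ indeed yields an upper bound on the segment count, which holds because the expression is monotone increasing in $m$. Hence the proof is a two-sentence appeal to Theorem~\ref{thm:3con} together with $m\le 3n-6$.
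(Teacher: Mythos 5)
Your proposal is correct and is exactly the paper's argument: the corollary is obtained by plugging the planar bound $m\le 3n-6$ into the $m-(n-4)/3$ segment count of Theorem~\ref{thm:3con}, with the grid size, running time, planarity, and monotonicity inherited unchanged. The arithmetic $(3n-6)-(n-4)/3=(8n-14)/3$ checks out, and your observation that the bound is increasing in $m$ justifies the substitution.
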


\section{Other planar graph classes}\label{sec:others}

We can use the results of Section~\ref{sec:triconnected}
to obtain grid drawings with few segments for other planar graph classes on an $O(n)\times O(n^2)$
grid in $O(n)$ time. In particular, 
we can draw
\begin{enumerate*}[label=(\roman*)]
	\item 4-connected triangulations with $5n/2 - 4$ segments;
  \item biconnected outerplanar graphs with $m - (n - 3)/2\le (3n-3)/2$ segments;
  \item outerplanar graphs with $(7n-9)/4$ segments,
    or with $(3n - 5)/2 + b$ segments, where $b$ is its number of maximal biconnected components; and
  \item planar graphs with $(17n-38)/3-m$ or $(17n-38)/6$ segments.
\end{enumerate*} 
Details are given in 
\ifproc
  the full version~\cite{kmss-dpgfs-arxiv19}.
\else
  Appendix~\ref{app:others}.
\fi

\subsubsection*{Acknowledgements.}
We thank Roman Prutkin for the initial discussion of the problem and Therese Biedl for helpful comments.

\clearpage

\bibliographystyle{splncs04}
\bibliography{abbrv,fewsegments}

\begin{thebibliography}{10}
\providecommand{\url}[1]{\texttt{#1}}
\providecommand{\urlprefix}{URL }
\providecommand{\doi}[1]{https://doi.org/#1}

\bibitem{acbfp-mdg-12}
Angelini, P., Colasante, E., Battista, G.D., Frati, F., Patrignani, M.:
  Monotone drawings of graphs. J. Graph Alg. Appl.  \textbf{16}(1),  5--35
  (2012). \doi{10.7155/jgaa.00249}

\bibitem{b-wtr-icalp02}
Bonichon, N., Sa{\"{e}}c, B.L., Mosbah, M.: Wagner's theorem on realizers. In:
  Widmayer, P., Ruiz, F.T., Bueno, R.M., Hennessy, M., Eidenbenz, S., Conejo,
  R. (eds.) Proc. 29th Int. Coll. Automata, Languages and Programming
  (ICALP'02). Lecture Notes Comput. Sci., vol.~2380, pp. 1043--1053. Springer
  (2002). \doi{10.1007/3-540-45465-9\_89}

\bibitem{cflrvw-dgflf-GD16}
Chaplick, S., Fleszar, K., Lipp, F., Ravsky, A., Verbitsky, O., Wolff, A.:
  Drawing graphs on few lines and few planes. In: Hu, Y., N{\"{o}}llenburg, M.
  (eds.) Proc. 24th Int. Symp. Graph Drawing Netw. Vis. (GD'16). Lecture Notes
  Comput. Sci., vol.~9801, pp. 166--180. Springer (2016).
  \doi{10.1007/978-3-319-50106-2\_14}

\bibitem{cflrvw-cdgfl-WADS17}
Chaplick, S., Fleszar, K., Lipp, F., Ravsky, A., Verbitsky, O., Wolff, A.: The
  complexity of drawing graphs on few lines and few planes. In: Ellen, F.,
  Kolokolova, A., Sack, J. (eds.) Proc. 15th Int. Symp. Algorithms Data Struct.
  (WADS'17). Lecture Notes Comput. Sci., vol. 10389, pp. 265--276. Springer
  (2017). \doi{10.1007/978-3-319-62127-2\_23}

\bibitem{cll-ostwa-sicomp05}
Chiang, Y.T., Lin, C.C., Lu, H.I.: Orderly spanning trees with applications.
  {SIAM} J. Comput.  \textbf{34}(4),  924--945 (2005).
  \doi{10.1137/s0097539702411381}

\bibitem{DiBattista1999}
{Di Battista}, G., Tamassia, R., Vismara, L.: Output-sensitive reporting of
  disjoint paths. Algorithmica  \textbf{23}(4),  302--340 (Apr 1999).
  \doi{10.1007/PL00009264}

\bibitem{desw-dpgfs-cgta07}
Dujmovi{\'c}, V., Eppstein, D., Suderman, M., Wood, D.R.: Drawings of planar
  graphs with few slopes and segments. Comput. Geom. Theory Appl.
  \textbf{38}(3),  194--212 (2007). \doi{10.1016/j.comgeo.2006.09.002}

\bibitem{dm-dptfs-cg19}
Durocher, S., Mondal, D.: Drawing plane triangulations with few segments.
  Comput. Geom.  \textbf{77},  27--39 (2019).
  \doi{10.1016/j.comgeo.2018.02.003}

\bibitem{dmnw-nmsdp-jgaa13}
Durocher, S., Mondal, D., Nishat, R.I., Whitesides, S.: A note on
  minimum-segment drawings of planar graphs. J. Graph Alg. Appl.
  \textbf{17}(3),  301--328 (2013). \doi{10.7155/jgaa.00295}

\bibitem{FelsnerZ08}
Felsner, S., Zickfeld, F.: Schnyder woods and orthogonal surfaces. Discrete
  {\&} Computational Geometry  \textbf{40}(1),  103--126 (2008).
  \doi{10.1007/s00454-007-9027-9}

\bibitem{ghnt-acog-algo98}
Garc{\'{\i}}a, A., Hurtado, F., Noy, M., Tejel, J.: Augmenting the connectivity
  of outerplanar graphs. Algorithmica  \textbf{56}(2),  160--179 (2008).
  \doi{10.1007/s00453-008-9167-1}

\bibitem{hr-gstgd-tcs15}
Hossain, M.I., Rahman, M.S.: Good spanning trees in graph drawing. Theor.
  Comput. Sci.  \textbf{607},  149--165 (2015). \doi{10.1016/j.tcs.2015.09.004}

\bibitem{hkms-dpgfg-jgaa18}
H{\"u}ltenschmidt, G., Kindermann, P., Meulemans, W., Schulz, A.: Drawing
  planar graphs with few geometric primitives. J. Graph Alg. Appl.
  \textbf{22}(2),  357--387 (2018). \doi{10.7155/jgaa.00473}

\bibitem{ims-dpc3c-jgaa17}
Igamberdiev, A., Meulemans, W., Schulz, A.: Drawing planar cubic 3-connected
  graphs with few segments: Algorithms {\&} experiments. J. Graph Alg. Appl.
  \textbf{21}(4),  561--588 (2017). \doi{10.7155/jgaa.00430}

\bibitem{k-adpg-phd93}
Kant, G.: Algorithms for Drawing Planar Graphs. Ph.D. thesis, Dept. of Computer
  Science, Utrecht University (1993),
  \url{https://dspace.library.uu.nl/bitstream/1874/842/1/full.pdf}

\bibitem{kms-eaadf-jgaa18}
Kindermann, P., Meulemans, W., Schulz, A.: Experimental analysis of the
  accessibility of drawings with few segments. J. Graph Alg. Appl.
  \textbf{22}(3),  501--518 (2018). \doi{10.7155/jgaa.00474}

\bibitem{kmv-gpgd-gd15}
Kobourov, S.G., Mchedlidze, T., Vonessen, L.: Gestalt principles in graph
  drawing. In: Giacomo, E.D., Lubiw, A. (eds.) Proc. 23rd Int. Symp. Graph
  Drawing Netw. Vis. (GD'15). Lecture Notes Comput. Sci., vol.~9411, pp.
  558--560. Springer (2015). \doi{10.1007/978-3-319-27261-0\_50}

\bibitem{krw-dgfcf-CALDAM18}
Kryven, M., Ravsky, A., Wolff, A.: Drawing graphs on few circles and few
  spheres. In: Panda, B.S., Goswami, P.P. (eds.) Proc. 4th Int. Conf.
  Algorithms Discrete Appl. Math. (CALDAM'18). Lecture Notes Comput. Sci., vol.
  10743, pp. 164--178. Springer (2018). \doi{10.1007/978-3-319-74180-2\_14}

\bibitem{man-cdrsl-ijfcs05}
Miura, K., Azuma, M., Nishizeki, T.: Canonical decomposition, realizer,
  schnyder labeling and orderly spanning trees of plane graphs. Int. J. Found.
  Comput. Sci.  \textbf{16}(01),  117--141 (2005).
  \doi{10.1142/s0129054105002905}

\bibitem{m-vgoto-phd16}
Mondal, D.: Visualizing graphs: optimization and trade-offs. phdthesis,
  University of Manitoba (2016), \url{http://hdl.handle.net/1993/31673}

\bibitem{mnbr-mscd3-jco13}
Mondal, D., Nishat, R.I., Biswas, S., Rahman, M.S.: Minimum-segment convex
  drawings of 3-connected cubic plane graphs. J. Comb. Optim.  \textbf{25}(3),
  460--480 (2013). \doi{10.1007/s10878-011-9390-6}

\bibitem{r-nmdpg-cn87}
Read, R.C.: A new method for drawing a planar graph given the cyclic order of
  the edges at each vertex. Congr. Numer.  \textbf{56},  31--44 (1987),
  \url{https://books.google.com/books?id=xn\_dSgAACAAJ}

\bibitem{s-epgg-soda90}
Schnyder, W.: Embedding planar graphs on the grid. In: Johnson, D.S. (ed.)
  Proc. 1st Ann. {ACM-SIAM} Symp. Discrete Algorithms (SODA'90). pp. 138--148.
  {SIAM} (1990), \url{http://dl.acm.org/citation.cfm?id=320191}

\bibitem{s-dgfa-jgaa15}
Schulz, A.: Drawing graphs with few arcs. J. Graph Alg. Appl.  \textbf{19}(1),
  393--412 (2015). \doi{10.7155/jgaa.00366}

\bibitem{wc-dcgms-CJ94}
Wade, G.A., Chu, J.: Drawability of complete graphs using a minimal slope set.
  Comput. J.  \textbf{37}(2),  139--142 (1994). \doi{10.1093/comjnl/37.2.139}

\bibitem{zh-cotta-dcg05}
Zhang, H., He, X.: Canonical ordering trees and their applications in graph
  drawing. Discrete Comput. Geom.  \textbf{33}(2),  321--344 (2005).
  \doi{10.1007/s00454-004-1154-y}

\end{thebibliography}

\ifproc
\end{document}
\fi

\clearpage
\appendix

\section{Omitted proofs from Section~\ref{sec:triconnected}}\label{app:triconnected}

\begin{backInTime}{goodtreeDrawing}

\begin{lemma}
  \goodtreeDrawingText
\end{lemma}
\begin{proof}
  It remains to show that~$\Gamma$ is slope-disjoint, the area is $O(n)\times O(n^2)$,
  and the algorithm runs in $O(n)$ time.
  
  We now prove that~$\Gamma$ is slope-disjoint.
  Let~$\eps>0$ be arbitrary small, $1/n$ should suffice.
  We set $\alpha_1(r)=1-\varepsilon$ and $\alpha_2(r)=s_n+\varepsilon$. 
  We assign the remaining values of $\alpha_1$ and~$\alpha_2$
  in pre-order. Let~$u$ be a vertex that has already been handled, that is,
  (S1) holds for~$u$, (S2) holds for $(p(u),u)$, and (S3) holds for $u$ and
  all of its siblings. Obviously, this holds for~$r$ in the beginning.
  Let $u_1,\ldots,u_k$ be the children of~$u$ in counter-clockwise order. By construction,
  we have $\slope(u_k)=\slope(u)<\alpha_2(u)$ and $\slope(u_k)>\alpha_1(u)$.
  Furthermore, by construction, all slopes in $T[u_i], 1<i\le k$ are
  larger than $\slope(u_{i-1})$ and at most $\slope(u_i)$. Hence, choosing
  $\alpha_1(u_1)=\alpha_1(u)+\eps$, 
  $\alpha_1(u_i)=\slope(u_{i-1})+\eps, 1<i\le k$, and
  $\alpha_2(u_i)=\slope(u_i)+\eps, 1\le i\le k$ satisfies (S1) for every
  $u_i$, (S2) for every edge $(u,u_i)$, and (S3) for every pair $(u_i,u_j)$.
  Repeating this construction establishes the conditions for every vertex,
  every edge, and every pair of siblings, so~$\Gamma$ is slope-disjoint.
  
  Finally, since every vertex is placed one $x$-coordinate to the right of its
  parent, we use at most~$n$ columns, and since the highest slope is~$n-1$,
  the drawing lies on a grid of size $O(n)\times O(n^2)$. Our algorithm consists
  of doing one post-order traversal and then placing the vertices, so it
  clearly takes $O(n)$ time.
\end{proof}
\end{backInTime}

\begin{backInTime}{schnyderBetween}
  \begin{lemma}
  \schnyderBetweenText
\end{lemma}
\begin{proof}
  From Lemma~\ref{lem:schnyder-consecutive}, it follows that
  $(u_3,k),\ldots,(u_p,k)$ are assigned to~$f$ for some~$k\in\{1,2,3\}$, so
  $(u_1,u_p)$ is an outgoing $(k+1)$-edge
  at~$u_p$ and $(u_2,u_3)$ is an outgoing $(k-1)$-edge at~$u_3$; since~$u_1$
  and~$u_p$ are leaves in~$T_k$,
  $(u_1,u_p)$ is either an incoming $(k+1)$-edge or an outgoing $(k-1)$-edge at~$u_1$
  and $(u_2,u_3)$ is either an incoming $(k-1)$-edge or an outgoing $(k+1)$-edge at~$u_2$.
  
  \textbf{Case 1:} $(u_1,u_p)$ is an incoming $(k+1)$-edge at~$u_1$
  and $(u_2,u_3)$ is an incoming $(k-1)$-edge at~$u_2$; see Fig.~\ref{fig:schnyder-between-none}. Then, neither~$u_1$
  nor~$u_2$ has two outgoing edges at~$f$, so there can be no pairs $(u,i)$
  and $(v,j)$ assigned to~$f$.
  
  \textbf{Case 2:} $(u_1,u_p)$ is an outgoing $(k-1)$-edge at~$u_1$
  and $(u_2,u_3)$ is an incoming $(k-1)$-edge at~$u_2$; see Fig.~\ref{fig:schnyder-between-u1}.
  By Lemma~\ref{lem:schnyder-consecutive}, if~$(u,i)$ is assigned to~$f$, then
  $i=k$; hence, $(u_1,u_2)$ has to be an outgoing $(k+1)$-edge at~$u_1$.
  By the cyclical ordering around~$u_2$, $(u_1,u_2)$ has to be either an incoming
  $(k-1)$-edge, or an outgoing $(k+1)$-edge at~$u_2$; both cases cannot be
  combined with an outgoing $(k+1)$-edge at~$u_1$. Hence, this case is not possible.
  
  \textbf{Case 3:} $(u_1,u_p)$ is an incoming $(k+1)$-edge at~$u_1$
  and $(u_2,u_3)$ is an outgoing $(k+1)$-edge at~$u_2$; see Fig.~\ref{fig:schnyder-between-u2}.
  This case is symmetric to Case 2: $(u_1,u_2)$ has to be an outgoing
  $(k-1)$-edge at~$u_2$ and either an outgoing $(k-1)$-edge or an incoming
  $(k+1)$-edge at~$u_1$, which is not possible.
  
  \textbf{Case 4:} $(u_1,u_p)$ is an outgoing $(k-1)$-edge at~$u_1$
  and $(u_2,u_3)$ is an outgoing $(k+1)$-edge at~$u_2$; see Fig.~\ref{fig:schnyder-between-u12}.
  By the same arguments as in Cases~2 and~3, $(u_1,u_2)$ has to be an outgoing
  $(k+1)$-edge at~$u_1$ and an outgoing $(k-1)$-edge at~$u_2$. However,
  since $(u_3,k),\ldots,(u_p,k)$ are assigned to~$f$, every edge $(u_q,u_{q+1}), 1\le q\le p$
  has to be an outgoing $(k+1)$-edge at~$u_q$, so there is a directed cycle
  in~$T_{k+1}$; a contradiction to $T_{k+1}$ being a tree. Thus, this case
  is also not possible.
\end{proof}
\end{backInTime}

\begin{figure}[t]
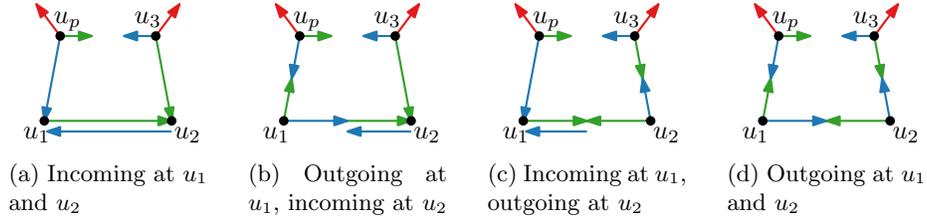

  \subcaptionbox{Incoming at $u_1$ and $u_2$\label{fig:schnyder-between-none}}{~\includegraphics[page=4]{schnyder-face}~}
  \hfill
  \subcaptionbox{Outgoing at $u_1$, incoming at $u_2$\label{fig:schnyder-between-u1}}{~\includegraphics[page=1]{schnyder-face}~}
  \hfill
  \subcaptionbox{Incoming at $u_1$, outgoing at $u_2$\label{fig:schnyder-between-u2}}{~\includegraphics[page=2]{schnyder-face}~}
  \hfill
  \subcaptionbox{Outgoing at $u_1$ and $u_2$\label{fig:schnyder-between-u12}}{~\includegraphics[page=3]{schnyder-face}~}
  \caption{Illustrations for the proof of Lemma~\ref{lem:schnyder-between}.}
  \label{fig:schnyder-between-app}
\end{figure}

\section{Other planar graph classes}\label{app:others}

In this section, we use the results of Section~\ref{sec:triconnected}
to obtain grid drawings with few segments for 4-connected triangulations,
(biconnected) outerplanar graphs, and planar graphs.
Using regular edge labelings, Zhang and He~\cite{zh-cotta-dcg05}
proved that any 4-connected triangulation admits a Schnyder tree with at most 
$\lceil(n+1)/2\rceil$ leaves. Applying this to Lemma~\ref{lem:goodtree-to-3con},
we find that our algorithm uses at most 
$m-n+1+\lceil(n+1)/2\rceil\le 3n-6-n+1+n/2+1=5n/2-4$ segments.

\begin{theorem}\label{thm:4con}
  Any 4-connected triangulation can be drawn planar monotone
  on an $O(n)\times O(n^2)$ grid with $5n/2-4$ segments
  in~$O(n)$ time.
\end{theorem}

We now consider outerplanar graphs.

\wormhole{outerbicon}
\newcommand{\outerbiconText}{%
  Any biconnected outerplanar graph can be drawn planar monotone
  on an $O(n)\times O(n^2)$ grid with $m-(n-3)/2$ segments
  in~$O(n)$ time.}
\begin{theorem}\label{thm:outerbicon}
  \outerbiconText
\end{theorem}
\begin{proof}
  Let $G=(V,E)$ be a biconnected outerplanar graph. We construct a 
  3-connected planar graph~$G'=(V',E')$ by adding a vertex~$r$ and connecting
  it to all vertices of~$V$. Hence, we have $n'=|V'|=n+1$ and $m'=|E'|=m+n$.
  Then, we compute a Schnyder realizer of~$G'$ such that one of its Schnyder
  trees, say~$T_3$, is rooted at~$r$. Since~$r$ is connected to all other vertices
  of~$V'$, $T_3$ consists of exactly those edges, so it has~$n'-1$ leaves.
  By Lemma~\ref{lem:schnyder-leaves}, at least one of~$T_1$ and~$T_2$ has
  $\lambda'\le n'/2+1$ leaves. We use Lemma~\ref{lem:goodtree-to-3con} to 
  obtain a planar monotone drawing of~$G'$ on an $O(n)\times O(n^2)$ grid with at most
  \[m'-n'+1+\lambda'\le m'-n'+1+\frac{n'}{2}+1=m'-\frac{n'}{2}+2=m+n-\frac{n+1}{2}+2=m+\frac{n}{2}+\frac{3}{2}\]
  segments in~$O(n)$ time. Since all edges of $E'\setminus E$ lie in~$T_3$, we can remove~$r$
  and those edges without splitting any segment into two segments. Hence, we
  obtain a drawing of~$G$ with 
  \[m+\frac{n}{2}+\frac{3}{2}-(m'-m)=m+\frac{n}{2}+\frac{3}{2}-n=m-\frac{n}{2}+\frac{3}{2}\]
  segments. Further, the monotonicity of~$G'$ depends only on the edges of its
  orderly spanning tree, so~$G$ is also drawn monotone.
\end{proof}

Outerplanar graphs have at most $m\le 2n-3$ edges, so this Corollary follows.

\begin{corollary}\label{cor:outerbicon}
  Any biconnected outerplanar graph can be drawn planar monotone
  on an $O(n)\times O(n^2)$ grid with $(3n-3)/2$ segments
  in~$O(n)$ time.
\end{corollary}

\wormhole{outer}
\newcommand{\outerText}{%
  Any connected outerplanar graph can be drawn planar 
  on an $O(n)\times O(n^2)$ grid with $(3n-5)/2+b$ segments,
  where~$b$ is its number of maximal biconnected components,
  in~$O(n)$ time.}
\begin{theorem}\label{thm:outer}
  \outerText
\end{theorem}
\begin{proof}
  Let $G=(V,E)$ be a connected outerplanar graph. We first augment~$G$
  to a biconnected outerplanar graph~$G'=(V',E')$ by adding the minimum number of edges
  required. Garc\'\i a et al.~\cite{ghnt-acog-algo98} gave an algorithm to
  do this in $O(n)$ time, and Read~\cite{r-nmdpg-cn87} has shown that, 
  if~$G$ consists of~$b$ maximal biconnected components, then the number of 
  edges required is at most $b-1$.
  Let~$d\le b-1$ be the number of edges added by this algorithm.
  Hence, we have~$n'=|V'|=n$ and $m'=|E'|= m+d$.
  We use Theorem~\ref{thm:outerbicon} to obtain a planar drawing of~$G'$
  on an $O(n)\times O(n^2)$ grid with at most $m'-n/2+3/2$
  segments.
  Removing the~$d$ added edges from~$G'$ splits at most~$d$ segments into
  two, so we obtain a drawing of~$G$ with at most
  $m'+d-n/2+3/2$ segments. 
  
  Since~$G'$ is outerplanar, we have $m'\le 2n-3$. Hence, the number of segments
  is at most $2n-3+d-n/2+3/2\le %
  3n/2+b-5/2$.
\end{proof}

By a simple case analysis, we can give a bound on the number of segments only
in terms of~$n$. If $m\le 7n/4-9/4$, then we draw~$G$ with~$m$ segments;
otherwise, $b\le 2n-3-m+1\le n/4+1/4$ and we use Theorem~\ref{thm:outer}
to draw~$G$ with at most $3n/2+b-5/2\le 3n/2+n/4+1/4-5/2=7n/4-9/4$
segments.

\begin{corollary}\label{cor:outer}
  Any connected outerplanar graph can be drawn planar 
  on an $O(n)\times O(n^2)$ grid with $(7n-9)/4$ segments
  in~$O(n)$ time.
\end{corollary}

Finally, we consider drawings of planar graphs.

\wormhole{planar}
\newcommand{\planarText}{%
  Any planar graph can be drawn planar 
  on an $O(n)\times O(n^2)$ grid with $(17n-38)/3-m$ segments
  in~$O(n)$ time.}
\begin{theorem}\label{thm:planar}
  \planarText
\end{theorem}
\begin{proof}
  Let $G=(V,E)$ be a planar graph. 
  We use a similar technique as for outerplanar graphs. We first augment~$G$ 
  to a 3-connected planar graph~$G'=(V',E')$ with $n'=|V'|=n$ and $m'=|E'|=m+d$ by adding edges.
  Then, we use Theorem~\ref{thm:3con} to draw~$G'$
  on an $O(n)\times O(n^2)$ grid with at most $m'-n/3-2/3$ segments in $O(n)$
  time. Finally, we remove the~$d$ added edges to obtain a drawing
  with at most $m'-n/3-2/3+d=m-n/3-2/3+2d$ segments.
  
  Unfortunately, adding a minimum number of edges to a planar graph to make
  it 3-connected is NP-hard~\cite{k-adpg-phd93} and we are not aware of any
  good bounds on the number of edges required. In the worst case, $G'$
  is a triangulation, so we have $m'\le 3n-6$ and $d\le 3n-6-m$. Hence, our drawing
  uses at most
  $m-n/3-2/3+2d\le m-n/3-2/3+6n-12-2m=17n/3-m-38/3$ segments.
\end{proof}

We can again use a simple case distinction to give a bound on the number of
segments purely in terms of~$n$. 
If~$m\le 17n/6-38/6$, then we draw~$G$ with~$m$
segments. Otherwise, we use Theorem~\ref{thm:planar} to draw~$G$ with at most
$17n/3-m-38/3\le 17n/6-38/6$ segments.

\begin{corollary}\label{cor:planar}
  Any planar graph can be drawn planar 
  on an $O(n)\times O(n^2)$ grid with $(17n-38)/6$ segments
  in~$O(n)$ time.
\end{corollary}

\end{document}